\documentclass[
    a4paper,
    12pt,
    DIV=11,
    headinclude=false,
    footinclude=false,
]{scrartcl}

\setkomafont{disposition}{\normalcolor\bfseries}

\newcommand{\LTadd}[1]{}

\newcommand{\LTinput}[1]{}
\newcommand{\YYCleverefInput}[1]{}

\usepackage[T1]{fontenc}
\usepackage[utf8]{inputenc}
\usepackage[main=english]{babel}
\usepackage{csquotes}
\usepackage{microtype}

\usepackage{mathtools}
\mathtoolsset{centercolon,mathic}
\allowdisplaybreaks[2] 
\usepackage{amssymb}
\ifdefined\newcounteralias
    \RemoveFromHook{package/amsthm/after}[firstaid/aliascounter]
\fi
\usepackage{amsthm}
\usepackage{thmtools}
\usepackage{derivative}
\derivset{\odv}[sep-frac-fun=\,]
\usepackage{enumitem}
\usepackage{booktabs}
\usepackage{makecell}
\usepackage[section]{placeins}

\newcommand{\statement}[1]{\paragraph{#1}\pdfbookmark[1]{#1}{#1}} 

\renewcommand{\Delta}{\varDelta}

\renewcommand{\Phi}{\varPhi}

\renewcommand{\Psi}{\varPsi}

\renewcommand{\Lambda}{\varLambda}

\renewcommand{\Gamma}{\varGamma}

\renewcommand{\Omega}{\varOmega}

\let\subset\subseteq

\let\epsi\varepsilon

\DeclarePairedDelimiter{\intervaloo}{\lparen}{\rparen}

\DeclarePairedDelimiter{\intervalco}{\lbrack}{\rparen}
\DeclarePairedDelimiter{\intervalcc}{\lbrack}{\rbrack}

\DeclarePairedDelimiter{\paren}{\lparen}{\rparen}
\DeclarePairedDelimiterXPP{\pospart}[1]{}{\lbrack}{\rbrack}{_+}{#1}
\DeclarePairedDelimiter{\abs}{\lvert}{\rvert}
\DeclarePairedDelimiter{\norm}{\lVert}{\rVert}
\DeclarePairedDelimiterXPP{\cbn}[1]{}{\lVert}{\rVert}{_{\mathup{cb}}}{#1}

    \newcommand{\VERT}[1]{#1|\mkern-1.5mu#1|\mkern-1.5mu#1|}

    \ExplSyntaxOn
    \makeatletter

    \MT_delim_default_inner_wrappers:n{\tnorm}

    \NewDocumentCommand{\tnorm}{ s o m }{
        \IfBooleanTF{#1}{
        \MT_delim_tnorm_star_wrapper:nnn%
            {\VERT{\bgroup\left}}{#3}{\VERT{\aftergroup\egroup\right}}
        }{
            \IfValueTF{#2}{
                \@nameuse{MT_delim_tnorm_nostarscaled_wrapper:nnn}%
                    {\VERT{\@nameuse {\MH_cs_to_str:N #2 l}}}
                    {#3}
                    {\VERT{\@nameuse {\MH_cs_to_str:N #2 r}}}
            }{
                \MT_delim_tnorm_nostarnonscaled_wrapper:nnn%
                    {\VERT{}}
                    {#3}
                    {\VERT{}}
            }
        }
    }

    \makeatother
    \ExplSyntaxOff

\DeclarePairedDelimiter{\commutator}{\lbrack}{\rbrack}

\DeclarePairedDelimiter{\acomm}{\lbrace}{\rbrace}

\DeclarePairedDelimiterX{\innerproduct}[2]{\langle}{\rangle}{#1,#2}
\DeclarePairedDelimiterX{\innerp}[2]{\langle}{\rangle}{#1,#2}

\DeclarePairedDelimiter{\List}{\{}{\}}
\DeclarePairedDelimiter{\ceil}{\lceil}{\rceil}

\DeclarePairedDelimiterXPP{\dd}[1]{d}{\lparen}{\rparen}{}{#1}
\DeclarePairedDelimiterXPP{\dist}[1]{d}{\lparen}{\rparen}{}{#1}
\DeclarePairedDelimiterXPP{\diam}[1]{\SYMdiam}{\lparen}{\rparen}{}{#1}
\DeclarePairedDelimiterXPP{\Exp}[1]{\exp}{\lparen}{\rparen}{}{#1}

\makeatletter
\let\bBigg@@\bBigg@
\renewcommand{\bBigg@}[2]{{%
  \mathchoice
    {\bBigg@@{#1}{#2}}%
    {\bBigg@@{#1}{#2}}%
    {\big@size=.5\big@size\bBigg@@{#1}{#2}}%
    {\big@size=.3\big@size\bBigg@@{#1}{#2}}}}%
\makeatother

\DeclareDocumentCommand{\trace}{s o e{_} m}{
    \IfValueTF{#3}
        {\Tr_{#3}}
        {\Tr}%
    \IfBooleanTF{#1}
        {\paren*{#4}}
        {
            \IfValueTF{#2}
                {\paren[#2]{#4}}
                {\paren{#4}}%
        }%
}
\DeclareDocumentCommand{\Trace}{s o e{_} m}{
    \IfValueTF{#3}
        {\Tr_{#3}}
        {\Tr}%
    \IfBooleanTF{#1}
        {\paren*{#4}}
        {
            \IfValueTF{#2}
                {\paren[#2]{#4}}
                {\paren{#4}}%
        }%
}

\providecommand\given{}

\newcommand\SetSymbol[1][]{%
    \nonscript\,#1\vert
    \allowbreak
    \nonscript\,
    \mathopen{}}
\DeclarePairedDelimiterX\Set[1]\{\}{%
    \renewcommand\given{%
        \SetSymbol[\delimsize]}
    \nonscript\,
    #1
    \nonscript\,
}

\ExplSyntaxOn
\makeatletter
\cs_generate_variant:Nn \tl_if_eq:nnT {onT}
\tl_new:N \__scale_new_delimsize_tl
\cs_new:Nn \__scale_make_bigger_tl:N {
    \tl_if_eq:onT {#1} {\@MHempty} { \tl_set:Nn \__scale_new_delimsize_tl { big } }
    \tl_if_eq:onT {#1} {}          { \tl_set:Nn \__scale_new_delimsize_tl { big } }
    \tl_if_eq:onT {#1} {\big}      { \tl_set:Nn \__scale_new_delimsize_tl { Big } }
    \tl_if_eq:onT {#1} {\Big}      { \tl_set:Nn \__scale_new_delimsize_tl { bigg } }
    \tl_if_eq:onT {#1} {\bigg}     { \tl_set:Nn \__scale_new_delimsize_tl { Bigg } }
    \tl_if_eq:onT {#1} {\middle}   { \tl_set:Nn \__scale_new_delimsize_tl { middle } }
}
\cs_new:Nn \scale_make_bigger_m:N {
    \__scale_make_bigger_tl:N #1
    \use:c { \tl_use:N \__scale_new_delimsize_tl }
}
\cs_new:Nn \scale_make_bigger_l:N {
    \__scale_make_bigger_tl:N #1
    \tl_if_eq:NnTF \__scale_new_delimsize_tl {middle} {
        \tl_set:Nn \__scale_new_delimsize_tl {left}
    }{
        \tl_put_right:Nn \__scale_new_delimsize_tl {l}
    }
    \use:c { \tl_use:N \__scale_new_delimsize_tl }
}
\cs_new:Nn \scale_make_bigger_r:N {
    \__scale_make_bigger_tl:N #1
    \tl_if_eq:NnTF \__scale_new_delimsize_tl {middle} {
        \tl_set:Nn \__scale_new_delimsize_tl {right}
    }{
        \tl_put_right:Nn \__scale_new_delimsize_tl {r}
    }
    \use:c { \tl_use:N \__scale_new_delimsize_tl }
}
\DeclarePairedDelimiterXPP{\pdd}[1]{\scale_make_bigger_l:N\delimsize\lparen d}{\lparen}{\rparen}{\scale_make_bigger_r:N\delimsize\rparen}{#1}
\DeclarePairedDelimiterXPP{\pdist}[1]{\scale_make_bigger_l:N\delimsize\lparen d}{\lparen}{\rparen}{\scale_make_bigger_r:N\delimsize\rparen}{#1}
\DeclarePairedDelimiterXPP{\pdiam}[1]{\scale_make_bigger_l:N\delimsize\lparen \SYMdiam}{\lparen}{\rparen}{\scale_make_bigger_r:N\delimsize\rparen}{#1}

\makeatother
\ExplSyntaxOff

\usepackage{xcolor}
\definecolorset{HTML}{col_}{}{%
    1,0072B2;%
    2,D55E00;%
    3,E69F00%
}

\definecolor{linkcolor}{HTML}{3060c0}

\usepackage[
    colorlinks,
    allcolors=linkcolor,
    bookmarksnumbered,
    bookmarksopen,
    bookmarksopenlevel=2,
    bookmarksdepth=3,
    pdfusetitle=true,
    breaklinks=true,
]{hyperref}

\makeatletter
\@ifpackageloaded{hyperref}{
    \usepackage[
        capitalise,
        noabbrev,
        nameinlink,
    ]{cleveref}
    \usepackage{orcidlink}
}{
    \usepackage[
        capitalise,
        noabbrev,
        poorman,
    ]{cleveref}
    \newcommand{\texorpdfstring}[2]{#1}
    \newcommand{\href}[2]{#2}
    \newcommand{\hypersetup}[1]{}
    \newcommand{\orcidlink}[1]{ORCiD}
    \newcommand{\pdfbookmark}[1]{}
}
\makeatother

\crefdefaultlabelformat{#2#1#3}
\crefname{equation}{}{}
\crefname{assumption}{Assumption}{Assumptions}
\crefname{conjecture}{Conjecture}{Conjectures}

\usepackage{caption}
\usepackage{subcaption}

\newcommand{\sumstack}[2][]{\ifstrempty{#1}{\sum_{\substack{#2}}}{\smashoperator[#1]{\sum_{\substack{#2}}}}}

\NewDocumentCommand{\intlap}{E{_^}{{}{}}}{
    \int_{\mathrlap{#1}}^{\mathrlap{#2}}
}

\newcommand{\e}{{\mathrm{e}}}
\newcommand{\I}{\mathrm{i}}
 \newcommand{\R}{\mathbb{R}}
\newcommand{\C}{\mathbb{C}}
\newcommand{\N}{\mathbb{N}}
\newcommand{\Z}{\mathbb{Z}}
\newcommand{\HS}{{\mathcal{H}}}

\newcommand{\alg}{\mathcal{A}}

\DeclareMathOperator{\id}{id}
\newcommand\cexpsym{\mathbb{E}}
\ExplSyntaxOn
\DeclareDocumentCommand{\cexp}{s o m m}{%
    \cexpsym\c_math_subscript_token{#3}
    \IfBlankF{#4}
    {
        \exp_last_unbraced:Ne \paren {\IfBooleanT{#1}{*}\IfValueT{#2}{[\exp_not:N #2]}} {#4}
    }
}
\ExplSyntaxOff
\newcommand{\calB}{\mathcal{B}}

\newcommand{\calL}{\mathcal{L}}
\newcommand{\calQ}{\mathcal{Q}}
\newcommand{\calR}{\mathcal{R}}

\DeclareMathOperator{\Tr}{Tr}
\newcommand{\SYMdiam}{\operatorname{diam}}

\newcommand{\powerset}{\mathcal{P}_0}

\newcommand{\suchthat}{\mathpunct{\ordinarycolon}}

\newcommand{\quadtext}[1]{\quad\text{#1}\quad}
\newcommand{\qquadtext}[1]{\quad\quadtext{#1}\quad}

\newcommand{\Alignindent}{\hspace*{2em}&\hspace*{-2em}}

\newcommand{\mathup}[1]{\mathrm{#1}}

\newcommand{\colonpunct}{\mathpunct{:}}

\newcommand{\Seq}{\mathup{S}}

\newcommand{\Path}{\mathup{P}}
\newcommand{\SP}{\mathup{SP}}

\DeclareMathOperator{\scale}{scl}

\DeclareDocumentCommand{\F}{s o e{_} s o m m}{
    \def\decayFdist{
        \IfBooleanTF{#4}
            {\dist*{#6}}
            {
                \IfValueTF{#5}
                    {\dist[#5]{#6}}
                    {\dist{#6}}%
            }%
    }
    \def\decayFarg{
        \IfNoValueTF{#7}
            { \decayFdist }
            { \decayFdist - #7 }
    }%
    \IfValueTF{#3}
        { F_#3 }
        { F }%
    \IfBooleanTF{#1}
        {\paren*{\decayFarg}}
        {
            \IfValueTF{#2}
                {\paren[#2]{\decayFarg}}
                {\paren[\big]{\decayFarg}}%
        }%
}

\let\oldsetminus\setminus
\newbox\mybox
\newcommand\cutsetminus[1]{%
    \setbox\mybox\hbox{\(#1\oldsetminus\)}%
    \ht\mybox=0pt%
    \dp\mybox=0pt%
    \usebox\mybox%
}
\renewcommand\setminus{%
    \mathbin{%
        \mathchoice%
            {\displaystyle\oldsetminus}
            {\textstyle\oldsetminus}
            {\cutsetminus{\scriptstyle}}
            {\cutsetminus{\scriptscriptstyle}}
    }%
}

\makeatletter
\ExplSyntaxOn

\newcounter{theoremenv}
\newcounter{theoremenvglobal}
\newlist{thmlist}{enumerate}{1}
\setlist[thmlist]{
    label=\textup{(\roman{thmlisti})},
    ref={(\roman{thmlisti})},
    partopsep=0pt,
    parsep=0pt,
    itemsep=2pt,
    topsep=2pt,
}
\renewcommand{\p@thmlisti}{\perh@ps{\protect\ref{auto-label:\arabic{theoremenvglobal}}}}
\DeclareRobustCommand{\perh@ps}[1]{#1}
\newcommand{\itemref}[1]{%
    \begingroup 
    \let\perh@ps\@gobble\ref{#1}%
    \endgroup
}

\newcommand{\quiet@label}[1]{%
    \@ifundefined{@@label@@}{\label{#1}}{\@@label@@{#1}}%
}

\addtotheorempostheadhook{
    \crefalias{thmlisti}{\thmt@envname}
    \setcounter{theoremenv}{\value{\thmt@envname}}
    \stepcounter{theoremenvglobal}
    \exp_args:NNe \renewcommand \thetheoremenv {\use:c{the\thmt@envname}}
    \exp_args:Ne \quiet@label {auto-label:\arabic{theoremenvglobal}}
}
\makeatother
\ExplSyntaxOff

\newlength{\thmspacing}
\declaretheoremstyle[
    spaceabove=\thmspacing, spacebelow=\thmspacing,
    headfont=\normalfont\bfseries, bodyfont=\normalfont\itshape,
]{thmplain}

\declaretheoremstyle[
    spaceabove=\thmspacing, spacebelow=\thmspacing,
    headfont=\normalfont\bfseries, bodyfont=\normalfont,
]{thmdefinition}

\declaretheoremstyle[
    spaceabove=\thmspacing, spacebelow=\thmspacing,
    headfont=\normalfont\itshape, bodyfont=\normalfont,
]{thmremark}

\theoremstyle{thmplain}

\declaretheorem[
    name=Theorem,
    parent=section,
]{theorem}

\declaretheorem[
    name=Lemma,
    sibling=theorem,
]{lemma}

\declaretheorem[
    name=Proposition,
    sibling=theorem,
]{proposition}

\declaretheorem[
    name=Corollary,
    sibling=theorem,
]{corollary}

\theoremstyle{thmdefinition}

\declaretheorem[
    name=Definition,
    sibling=theorem,
]{definition}

\theoremstyle{thmremark}
\declaretheorem[
    name=Remark,
    sibling=theorem,
]{remark}

\makeatletter
\def\theHtheorem{\theHsection.\arabic{theorem}}
\@for\thm@name:=lemma,proposition,corollary,conjecture,definition,remark,example\do{%
    \expandafter\def\csname theH\thm@name\endcsname{\theHtheorem}%
}
\makeatother

\usepackage{tikz}
\newlength{\ul}
\newlength{\lw}
\pgfdeclarelayer{background}
\pgfdeclarelayer{lattice}
\pgfsetlayers{background,lattice,main}
\usetikzlibrary{
    arrows.meta,
}

\tikzset{%
    myLine/.style = {
        line width = #1\lw,
    },
    myLine/.default = 1,
    x = \ul,
    y = \ul,
    every layer/.style={
        myLine,
        text=black,
    },
    every picture/.style={
        x=\ul,
        y=\ul,
        every layer,
    },
    text = black,
    >={latex},
    shorten/.style = {
        shorten < = #1,
        shorten > = #1
    },
    shorten/.default = 2\lw,
    on layer/.code = {
        \pgfonlayer{#1}\begingroup
        \aftergroup\endpgfonlayer
        \aftergroup\endgroup
    },
    on background/.style = {
        preaction = {
            #1,
            on layer = background,
        },
    },
    only inside/.style = {
        preaction = {
            clip,
            postaction = {
                line width = 2\lw,
                #1
            },
        },
    },
    only inside/.default = {draw},
    set/.style = {
        shape = circle,
        myLine,
        minimum height = 1.5\ul,
        minimum width = 1.5\ul,
        preaction = {
            on layer = background,
            fill = #1!20,
            line width = \lw,
            draw = #1,
        },
        draw = #1,
        draw opacity = .3,
        text = #1,
        text opacity = 1,
        inner sep = 1\lw,
        outer sep = 0pt,
    },
    set/.default = gray,
    set label/.style = {
        shape = circle,
        minimum width = 0pt,
        minimum height = 0pt,
        on background = {},
        inner sep = 1\lw,
        outer sep = 2\lw,
        text = #1,
    },
    set label/.default = gray,
    point/.style = {
        circle,
        fill = #1,
        draw=none,
        inner sep=0,
        minimum width=2\lw,
        minimum height=2\lw,
        outer sep = 0pt,
    },
    point/.default = gray,
    big/.style = {
        minimum height = 2.5\ul,
        minimum width = 2.5\ul,
    },
    bounding note/.style = {
        use as bounding box,
        shape = rectangle,
        inner sep = 1\lw,
        outer sep = 0pt,
    },
    overprint/.style = {
        execute at begin node=\overprinton,
        execute at end node=\overprintoff,
    },
}

\usetikzlibrary{nfold}
\usetikzlibrary{spath3,intersections,nfold,decorations.pathreplacing}
\makeatletter
\newcommand\tikz@offset@apply[1]{%
    \pgfsyssoftpath@getcurrentpath\tikz@offset@keep
    \pgfsyssoftpath@setcurrentpath\pgfutil@empty
    \pgfoffsetpath{\tikz@actions@path}{#1}%
    \pgfsyssoftpath@getcurrentpath\tikz@actions@path
    \pgfsyssoftpath@setcurrentpath\tikz@offset@keep
}
\tikzset{
    offset/.code 2 args={%
        \tikzset{postaction={/utils/exec={\tikz@offset@apply{#1}},#2}}},
    offset/.default={5pt}{draw},
    grow region/.style 2 args = {
        line width={2*#1},
        line join=round,
        use path=#2,
    },
}

\newcommand\diffcircles[5]{%
  \path[name path=@out] (#1) circle (#2);
  \path[name path=@in]  (#3) circle (#4);
  \path[name intersections={of=@out and @in, by={@P,@Q}}];
  \pgfmathanglebetweenpoints{\pgfpointanchor{#1}{center}}{\pgfpointanchor{@P}{center}}\edef\myaP{\pgfmathresult}
  \pgfmathanglebetweenpoints{\pgfpointanchor{#1}{center}}{\pgfpointanchor{@Q}{center}}\edef\myaQ{\pgfmathresult}
  \pgfmathanglebetweenpoints{\pgfpointanchor{#3}{center}}{\pgfpointanchor{@P}{center}}\edef\mybP{\pgfmathresult}
  \pgfmathanglebetweenpoints{\pgfpointanchor{#3}{center}}{\pgfpointanchor{@Q}{center}}\edef\mybQ{\pgfmathresult}
  \pgfmathsetmacro\myRout{#2}\pgfmathsetmacro\myRin{#4}%
  \pgfmathsetmacro\myaQ{\myaQ<\myaP ? \myaQ+360 : \myaQ}%
  \path (#1) ++({(\myaP+\myaQ)/2}:#2) coordinate (@M);
  \pgfpointdiff{\pgfpointanchor{#3}{center}}{\pgfpointanchor{@M}{center}}%
  \pgfmathsetmacro\mydist{veclen(\the\pgf@x,\the\pgf@y)}%
  \ifdim\mydist pt<\myRin pt\relax\pgfmathsetmacro\myaQ{\myaQ-360}\fi
  \pgfmathsetmacro\mybP{\mybP<\mybQ ? \mybP+360 : \mybP}%
  \path (#3) ++({(\mybP+\mybQ)/2}:#4) coordinate (@M);
  \pgfpointdiff{\pgfpointanchor{#1}{center}}{\pgfpointanchor{@M}{center}}%
  \pgfmathsetmacro\mydist{veclen(\the\pgf@x,\the\pgf@y)}%
  \ifdim\mydist pt>\myRout pt\relax\pgfmathsetmacro\mybP{\mybP-360}\fi
  \path[save path=#5] (@P)
    arc[start angle=\myaP, end angle=\myaQ, radius=#2]
    arc[start angle=\mybQ, end angle=\mybP, radius=#4] -- cycle;
}
\makeatother

\usepackage{xurl}

\usepackage[
    bibstyle=phys,
    citestyle=numeric,
    biblabel=brackets,
    mincitenames=1,
    maxcitenames=2,
    minalphanames=3,
    maxalphanames=4,
    minbibnames=3,
    maxbibnames=10,
    arxiv=abs,
    eprint=true,
    doi=false,
    url=false,
    sorting=nty,
    useprefix=true,
    alldates=year,
    urldate=long,
]{biblatex}
\renewbibmacro*{note+pages}{%
    \printfield{note}%
    \setunit{\bibpagespunct}%
    \printfield{pages}%
    \newunit%
}
\makeatletter
\DeclareFieldFormat{eprint:arxiv}{%
    \ifhyperref
    {\href{https://arxiv.org/\abx@arxivpath/#1}{%
            arXiv\addcolon\linebreak[2]#1}}
    {arXiv\addcolon
        \nolinkurl{#1}%
    }
}
\makeatother

\renewbibmacro*{maintitle+title}{%
    \iffieldsequal{maintitle}{title}{%
        \clearfield{maintitle}%
        \clearfield{mainsubtitle}%
        \clearfield{maintitleaddon}%
    }{%
        \iffieldundef{maintitle}{}{%
            \printtext[doi/url-link]{%
                \usebibmacro{maintitle}%
            }%
            \newunit\newblock
            \iffieldundef{volume}{}{%
                \printfield{volume}%
                \printfield{part}%
                \setunit{\addcolon\space}%
            }%
        }%
    }%
    \printtext[doi/url-link]{%
        \usebibmacro{title}%
    }%
    \newunit%
}

\DeclareBibliographyDriver{article}{%
    \usebibmacro{bibindex}%
    \usebibmacro{begentry}%
    \usebibmacro{author/translator+others}%
    \setunit{\labelnamepunct}\newblock
    \usebibmacro{title}%
    \newunit
    \printlist{language}%
    \newunit\newblock
    \usebibmacro{byauthor}%
    \newunit\newblock
    \usebibmacro{bytranslator+others}%
    \newunit\newblock
    \printfield{version}%
    \newunit\newblock
    \printtext[doi/url-link]{%
        \usebibmacro{journal+issuetitle}%
        \newunit
        \usebibmacro{byeditor+others}%
        \newunit
        \usebibmacro{note+pages}%
        \newunit\newblock
        \iftoggle{bbx:isbn}
        {\printfield{issn}}
        {}%
    }%
    \newunit\newblock
    \printfield{pubstate}%
    \setunit{\addspace}%
    \printfield{year}%
    \newunit\newblock
    \usebibmacro{doi+eprint+url}%
    \newunit\newblock
    \printfield{addendum}%
    \setunit{\bibpagerefpunct}\newblock
    \usebibmacro{pageref}%
    \newunit\newblock
    \iftoggle{bbx:related}
        {\usebibmacro{related:init}%
        \usebibmacro{related}}
        {}%
    \usebibmacro{finentry}%
}

\makeatletter
\def\blindfootnote{\gdef\@thefnmark{}\@footnotetext}
\makeatother
\newcommand{\emaillink}[1]{\href{mailto:#1}{#1}}

\title{Lieb-Robinson bounds for irreversible dynamics via adaptive\\ Dyson expansion}
\date{}

\author{
    Carla Rubiliani%
    \texorpdfstring{%
        \,\orcidlink{0009-0000-6169-8658}
        \footnote{
            \foreignlanguage{ngerman}{
                Fachbereich Mathematik,
                Universität~Tübingen,\newline
                Auf~der~Morgenstelle~10,
                72076~Tübingen%
            },
            Germany%
        }
    }{}%
    \and
    Marius Wesle%
    \texorpdfstring{%
        \,\orcidlink{0009-0006-3365-4037}
        \footnotemark[1]
    }{}%
    \and
    Tom Wessel%
    \texorpdfstring{%
        \,\orcidlink{0000-0001-7593-0913}
        \footnote{
            Department of Mathematics, Technical University of Munich,\newline
            Boltzmannstr.~3,
            85748 Garching,
            Germany
        }
        \footnote{
            Munich Center for Quantum Science and Technology,
            80799 München,
            Germany
        }
    }{}%
}

\begin{document}

\bgroup
\hypersetup{hidelinks}
\maketitle\thispagestyle{empty}
\blindfootnote{
    \parbox[t]{.8\textwidth}{
        Email:
        \parbox[t]{.5\textwidth}{
            \emaillink{carla.rubiliani@uni-tuebingen.de}
            \\\emaillink{marius.wesle@uni-tuebingen.de}
            \\\emaillink{tom.wessel@tum.de}
        }
    }
}
\egroup

\begin{abstract}
    In this note, we study a broad class of time-dependent and irreversible dynamics for quantum spin systems in both finite and infinite volume.
    We develop a simple and flexible strategy for expanding commutators of the form \(\commutator{A,\tau_{s,t} \, B}\) in terms of self-avoiding paths of sets, where \(A\) and \(B\) are local operators and \(\tau_{s,t}\) is the Heisenberg time evolution.
    This \emph{adaptive Dyson expansion} requires no backwards time evolution and is thus also applicable to irreversible dynamics.
    Furthermore, we discuss how such an expansion directly implies Lieb-Robinson bounds.
    A central feature of our result is that the obtained Lieb-Robinson bounds are independent of any Hamiltonian and dissipative on-site terms of the generator.
\end{abstract}

\section{Introduction}

Understanding interacting many-body quantum systems is notoriously challenging due to the exponential scaling of the Hilbert space dimension with the number of degrees of freedom.
Lieb-Robinson bounds are a key tool in their rigorous analysis, as they provide locality estimates that control the propagation of information and correlations in space, reflecting the local structure of the underlying dynamics.
Originally proven by \textcite{LR1972} for finite-range spin systems, Lieb-Robinson bounds establish an effective speed bound for the propagation of information through the lattice.
More concretely, in their standard form for spin systems they provide bounds on the commutator
\begin{equation}
    \label{eq:LRB-short-range-intro}
    \norm[\big]{
        \commutator[\big]{
            A, \tau_{s,t} \, B
        }
    }
    \leq
    2 \, \norm{A} \, \norm{B} \, \min\List{\abs{X},\abs{Y}} \, \e^{b \paren*{v \, (t-s) - \dist{X,Y}}}
    .
\end{equation}
Here, \(\tau_{s,t}\) is the Heisenberg time evolution from time~\(s\) to time~\(t\), \(v\) is a constant depending on the operator norm of the local interaction terms and is commonly known as Lieb-Robinson velocity, and \(A\),~\(B\) are bounded operators supported on the subsets~\(X\),~\(Y\), respectively, of the lattice.
Lieb-Robinson bounds can be used to prove the existence of the thermodynamic limit of the time evolution, see e.g.~\cite{NOS2006,NSY2019,TWW2025}.
They attracted significant attention when they were identified as a powerful tool for resolving long-standing open problems in mathematical physics~\cite{Hastings2004lieb,HK2006,NS2006,Hastings2007area}, and have since become a standard tool in the study of many-body quantum lattice systems.
In the last 20~years, Lieb-Robinson bounds have been improved for special classes of systems, such as those with commuting~\cite{LW2025} or almost commuting~\cite{HHKL2021} interactions, one-dimensional systems~\cite{CL2019,CL2021}, and specific models such as the transverse field Ising model~\cite{WH2020}.
They have been generalized to short-range~\cite{HK2006,NS2006,NSY2019} and long-range~\cite{FGC2015,MKN2016,EMN2020,KS2020,TCE2020,TGB2021} quantum spin systems, to fermionic~\cite{HK2006,NS2006,NSY2017,TW2025} and bosonic~\cite{yin2022finite,kuwahara2021lieb,FLS2022,LRSZ2023,LRZ2025,kuwahara2024effective,KLR2025,deuchert2025dynamics,LR2026} lattice systems, and to spatially growing interactions~\cite{TWW2025}.
Furthermore, Lieb-Robinson bounds have been generalized to irreversible dynamics of open quantum systems with finite-range~\cite{Poulin2010} and more general interactions~\cite{NVZ2011,SEK2019}.

In this note, we present a simple and flexible method for establishing Lieb-Robinson bounds.
The method proceeds in two steps, which we outline now before giving a more precise description.
First, one derives an expansion of the commutator \(\commutator{A,\tau_{s,t} \, B}\) by iteratively applying Duhamel's formula.
In each step of the expansion, one compares the time evolution with a second evolution that depends on the previous steps.
For this reason, we call this procedure \emph{adaptive Dyson expansion}.
The result is a sum over self-avoiding paths of sets.
The second step is the derivation of Lieb-Robinson bounds, which can straightforwardly be proven from the adaptive Dyson expansion of the first step.

The method we develop is directly applicable to a broad class of dynamics generated by sums of local time-dependent dissipative terms.
In particular, this setting contains Lindbladian dynamics as a special case while also allowing for more general dissipative generators.
For example, when the on-site Hilbert spaces are infinite-dimensional, it allows for non-normal local generators, i.e. generators that are not ultraweakly continuous (see e.g.~\cite[Chapter 3]{Fagnola1999}).

The expansion is a generalization of the one derived for time-independent Hamiltonian systems by \textcite{CL2021}.
They obtain the expression by writing out the full Dyson series of the time evolution and performing a non-trivial resummation to handle non-self-avoiding contributions.
A simpler proof based on an iterative Duhamel scheme is given in~\cite[Appendix~A.3]{MLL2025}, where the self-avoiding structure follows directly from repeated applications of Duhamel's formula.
The adaptive Dyson expansion of this work was developed independently of~\cite{MLL2025} and follows a similar iterative proof.
We present a detailed and self-contained derivation of the expansion, and also establish the validity of the expansion in the thermodynamic limit.

Furthermore, our result yields Lieb-Robinson bounds that are independent of both Hamiltonian and dissipative on-site terms.
For Hamiltonian on-site terms, this independence is well known and follows from an interaction-picture argument using techniques developed in~\cite{NRSS2009}.
As pointed out in~\cite{NVZ2011}, on-site terms can be removed with the same techniques also for irreversible dynamics.
This, however, only applies to Hamiltonian on-site terms, since the interaction picture requires a backwards time evolution.
The present result shows, for the first time to the authors' knowledge, that dissipative on-site terms do not affect Lieb-Robinson bounds.
This property is particularly useful for the analysis of truncated boson models and used in~\cite{LM2026}.

This note also lays the foundation for a more involved version of the adaptive Dyson expansion, which we will use in an upcoming work~\cite{RWW2026} to prove Lieb-Robinson bounds with linear light cones for a broad class of long-range systems including fermions and irreversible dynamics.

Let us now give a more precise description of our result.
Although our results apply to a much broader class of dynamics generated by local, time-dependent, dissipative terms (see \cref{def:interaction} for the precise assumptions), we restrict our attention to a simpler setting in the introduction for ease of presentation.
Specifically, we consider a time evolution that is generated by a sum of local, time-independent terms~\(\Phi(Z)\) which are assumed to be Lindbladian terms in GKSL form acting on observables~\(O\) as
\begin{equation*}
    \Phi(Z) \, O
    =
    \sum_{j\in J_Z} \I \, \commutator{H_j(Z), O} + L_j(Z)^* \, O \, L_j(Z) - \tfrac{1}{2} \, \acomm[\big]{L_j(Z)^* \, L_j(Z), O}
    ,
\end{equation*}
where \(J_Z\) is a finite index set, \(\acomm{\cdot,\cdot}\) denotes the anticommutator, and the Hamiltonian terms \(H_j(Z)\) and jump operators \(L_j(Z)\) are bounded operators supported in \(Z\), with \(H_j(Z)\) self-adjoint.
For all sets \(X\) and \(Y\) in the lattice and for all observables~\(A\) and~\(B\) supported on \(X\) and~\(Y\), respectively, we prove the following exact expansion, where \(K\) acts on observables~\(O\) as \(K \, O = \commutator{A,O}\),
\begin{equation}
    \label{eq:exp-intro}
    \begin{aligned}
        K \, \tau_{s,t} \, B
        =
        \sum_{n=0}^\infty
        & \intlap_{s}^t \odif{u_1}
        \intlap_{u_1}^t \odif{u_2}
        \, \dotsi
        \intlap_{u_{n-1}}^t \odif{u_n}
        \, \sumstack{\gamma\in \SP_n(X,Y)}
        \\&
        K
        \, \tau_{s,u_1}
        \, \Phi(\gamma_1)
        \, \tau^{\gamma_{:0}}_{u_1,u_2}
        \, \Phi(\gamma_2)
        \, \dotsm
        \, \tau^{\gamma_{:n-2}}_{u_{n-1},u_n}
        \, \Phi(\gamma_{n})
        \, \tau^{\gamma_{:n-1}}_{u_n,t}
        \, B
        ,
    \end{aligned}
\end{equation}
in terms of self-avoiding paths \(\gamma = (X=\gamma_0,\gamma_1,\dotsc,\gamma_n,\gamma_{n+1}=Y)\) where \(\gamma_j\) are subsets of the lattice that intersect the preceding set \(\gamma_{j-1}\) but none of the earlier sets \(\gamma_{k}\) with \(k<j-1\).
The in-between time-evolutions \(\tau^{\gamma_{:j}}\) are generated by only the local terms that are disjoint from all previous sets in the path \(\gamma_{:j} = (\gamma_0,\dotsc,\gamma_j)\).
Note that~\eqref{eq:exp-intro} holds more generally for bounded operators \(K\) acting on observables, provided they act trivially outside~\(X\).
Importantly, the time-evolutions are all contractive, so that one immediately obtains the following Lieb-Robinson bound in terms of self-avoiding paths by the triangle inequality
\begin{equation}
    \label{eq:intro-expansion-norm-bound}
    \norm{
        K \, \tau_{s,t} \, B
    }
    \leq
    \norm{K}
    \, \norm{B}
    \, \sum_{n=0}^\infty
    \, \frac{(t-s)^n}{n!}
    \quad \sumstack[l]{\gamma\in \SP_n(X,Y)}
    \, \prod_{j=1}^n
    \, \norm{\Phi(\gamma_j)}
    .
\end{equation}
From this expression one can conclude a Lieb-Robinson bound as given in~\eqref{eq:LRB-short-range-intro} by standard tools, see \cref{thm:Lieb-Robinson-bound}.
We highlight that~\eqref{eq:intro-expansion-norm-bound} provides a stronger result through an example, which was already provided in~\cite{CL2021} for Hamiltonian systems:
Consider a model with only on-site terms and nearest-neighbour interactions on~\(\Z\), i.e.\ \(\Phi(Z) = 0\) unless \(Z=\List{z}\) or \(Z=\List{z,z+1}\) for some \(z\in \Z\), and with \({J \coloneq \sup_{z\in \Z} \, \norm{\Phi(\List{z,z+1})} < \infty}\) and \(\sup_{z\in \Z} \, \norm{\Phi(\List{z})}<\infty\).
Then, for any two finite, disjoint intervals%
\footnote{
    Here, an interval in \(\Z\) is the intersection of an interval in~\(\R\) with~\(\Z\).
}
\(X\),~\(Y\subseteq \Z\) there exists at most one self-avoiding path contributing to~\eqref{eq:intro-expansion-norm-bound}, which consists of the nearest-neighbour hoppings between \(X\) and \(Y\), and one obtains a Lieb-Robinson bound
\begin{equation*}
    \norm{
        K \, \tau_{s,t} \, B
    }
    \leq
    \norm{K}
    \, \norm{B}
    \, \frac{
        \paren[\big]{J \, (t-s)}^{\dist{X,Y}}
    }{
        \dist{X,Y}!
    }
    .
\end{equation*}

The note is organized as follows.
We start by introducing our set-up, including the definition of self-avoiding paths, in \cref{sec:math-setup}.
In \cref{sec:results}, we state our main results: the adaptive Dyson expansion, \cref{thm:adaptive-dyson-expansion}, the resulting Lieb-Robinson bounds, \cref{thm:Lieb-Robinson-bound,cor:Lieb-Robinson-bound-nearest-neighbour-spin-chain,thm:finite-range-Lieb-Robinson-bound}, and consequences for operator localization, \cref{cor:localization-of-operators}.
Moreover, we compare them to the literature and hint at the important steps of the proofs.
In \cref{sec:proof-adaptive-dyson-expansion}, we prove the adaptive Dyson expansion.
Using \cref{thm:adaptive-dyson-expansion}, we then prove the Lieb-Robinson bounds and related consequences in \cref{sec:proofs-Lieb-Robinson-bounds}.

\section{Mathematical setup}
\label{sec:math-setup}

In this section, we introduce the framework that we use to describe quantum lattice spin systems.
Along the way, we comment on how the more standard examples fit into this framework.
First, we introduce the lattices and algebraic description of quantum spin systems.
Afterwards, we introduce interactions as a generalization of the usual notion of interactions used in Hamiltonian systems, which allows us to also cover open quantum systems and discuss the time evolutions they generate.
We conclude the section by defining self-avoiding paths, which play an important role in the adaptive Dyson expansion.

\subsection{Algebraic description of lattice spin systems}

Let \((\Gamma,d)\) be a countable metric space.
We denote the set of finite subsets of~\(\Gamma\) by~\(\powerset(\Gamma)\).

\begin{definition}[Decay function]
    \label{def:decay-function}
    A non-increasing function \(F\colon \intervalco{0,\infty} \to \intervaloo{0,\infty}\) is called a \emph{decay function} on \(\Gamma\) if
    \begin{align}
        \norm{F}
        &\coloneq
        \sup_{x\in \Gamma} \sum_{y \in \Gamma} F\pdist{x,y}
        <
        \infty
        \label{eq:F-norm}
    \shortintertext{and}
        C_F
        &\coloneq
        \sup_{x,y\in \Gamma} \sum_{z\in\Gamma} \frac{F\pdist{x,z} \, F\pdist{z,y}}{F\pdist{x,y}}
        <
        \infty
        \label{eq:convolution-constant}
        .
    \end{align}
\end{definition}

Note that not every countable metric space admits a decay function.
Indeed, since \(F\) is positive and non-increasing, condition~\eqref{eq:F-norm} forces every ball of finite radius to contain finitely many sites.

\begin{remark}
    \label{rem:square-lattice-decay-function}
    On \(\Gamma=\Z^D\), or any subset thereof equipped with the \(\ell^1\) metric, the functions \(F_\alpha(r) \coloneq (1+r)^{-\alpha}\) and \(F_\alpha(r) \, \e^{-b r}\) are decay functions for \(\alpha>D\) and \(b>0\), see e.g.~\cite[Appendix]{NSY2019}.
\end{remark}

For each \(x \in \Gamma\), let \(\HS_x\neq \List{0}\) be some (possibly infinite-dimensional, non-separable) Hilbert space and for \(\Lambda \in \powerset(\Gamma)\) let
\begin{equation*}
    \tilde{\alg}_\Lambda
    \coloneq
    \calL \paren[\Big]{\bigotimes_{x\in \Lambda} \HS_x}
\end{equation*}
be the algebra of bounded operators on the tensor product Hilbert space.
For all \(\Lambda\),~\(\Lambda' \in \powerset(\Gamma)\) with \(\Lambda \subseteq \Lambda'\) the algebra \(\tilde{\alg}_\Lambda\) can naturally be embedded into \(\tilde{\alg}_{\Lambda'}\) by tensoring with the identity.
This allows us to take the inductive limit in the category of C*-algebras, which results in the quasi-local algebra \(\alg\).
In other words, up to a number of identifications, \(\alg\) is the norm-completion of the union of all \(\tilde{\alg}_\Lambda\).
For finite \(\Lambda \subseteq \Gamma\) we denote by \(\alg_\Lambda\) the embedding of \(\tilde{\alg}_\Lambda\) into \(\alg\) and for infinite \(\Lambda \subseteq \Gamma\) we define
\begin{equation*}
    \alg_{\Lambda}
    \coloneq
    \overline{\bigcup_{Z \in \powerset(\Lambda)} \alg_Z}
    .
\end{equation*}
Note that \(\alg_\Lambda \subseteq \alg_{\Lambda'}\) for all \(\Lambda \subseteq \Lambda' \subseteq \Gamma\), \(\alg_\emptyset= \C\cdot 1\) and \(\alg_\Gamma = \alg\).
We say that an operator \(A\in \alg\) is supported in \(\Lambda\subseteq \Gamma\) if \(A \in \alg_\Lambda\).

Denote by \(\calL(\alg)\) the algebra of bounded linear maps on \(\alg\).
We say that an operator \(K\in \calL(\alg)\) is localized in the set \(Z\subseteq \Gamma\) if
\begin{align*}
    K \, \alg_{\Lambda}
    &\subseteq
    \alg_{\Lambda\cup Z}
    \qquad\text{for all \(\Lambda\subseteq \Gamma\)}
\shortintertext{and}
    K \, \alg_{\Lambda}
    &=
    \List{0}
    \qquad\text{for all \(\Lambda\subseteq \Gamma\) with \(Z\cap \Lambda = \emptyset\)}
    .
\end{align*}
We denote by \(\calB_Z \subseteq \calL(\alg)\) the set of all operators localized in \(Z\).
For \(Z\subseteq Z' \subseteq \Gamma\) it holds that \(
    \List{0}
    =
    \calB_\emptyset
    \subseteq
    \calB_Z
    \subseteq
    \calB_{Z'}
    \subseteq
    \calB_\Gamma
    =
    \Set{K\in\calL(\alg) \given K \, 1=0}
\).

\subsection{Interactions and generated time evolutions}

Let \(I\subseteq \R\) be a fixed time interval.
Our results are formulated for so-called \emph{interactions}~\(\Phi\) that describe the local terms.
They are usually thought of as the (not necessarily convergent) sum of local terms \(\sum_{Z\in \powerset(\Gamma)} \Phi(Z,t)\), where each term \(\Phi(Z,t)\) acts on a finite support~\(Z\subseteq \Gamma\).

\begin{definition}[Interaction]
    \label{def:interaction}
    An \emph{interaction} is a map
    \begin{equation*}
        \Phi\colon \powerset(\Gamma)\times I \to \calL(\alg), \ (Z,t) \mapsto \Phi(Z,t)
        ,
    \end{equation*}
    such that the following conditions are satisfied:
    \begin{thmlist}
        \item
            \label{item:def-interaction:localized}
            \(\Phi(Z,t)\in \calB_Z\).
        \item
            \label{item:def-interaction:dissipative}
            \(\Phi(Z,t)\) is dissipative%
            \footnote{%
                An operator \(K\in \calL(\alg)\) is called dissipative if for all \(\lambda>0\) and all \(A\in \alg\) it holds that\\ \(\norm{(\lambda \, \id - K) \, A} \geq \lambda \, \norm{A}\).
                This is equivalent to \(\intervalco{0,\infty} \to \calL(\alg), \, t \mapsto \e^{tK}\) being a contraction semigroup by the Lumer-Phillips Theorem.
            }%
            .
        \item
            \label{item:def-interaction:continuous}
            For all \(Z\in \powerset(\Gamma)\) the map \(I \to \calL(\alg),\ t \mapsto \Phi(Z,t)\) is norm-continuous and uniformly bounded.
    \end{thmlist}

    \noindent
    We call \(\Phi\) \emph{finitely supported} if \(\sup_{t\in I} \, \norm{\Phi(Z,t)} = 0\) for all but finitely many \(Z\in \powerset(\Gamma)\).
    We say it has \emph{finite range}~\(R\geq 0\) if \(\sup_{t\in I} \, \norm{\Phi(Z,t)} = 0\) unless \(\diam{Z} \leq R\), and it is a \emph{\(k\)-body} interaction for some \(k\in\N_{>0}\) if \(\sup_{t\in I} \, \norm{\Phi(Z,t)} = 0\) unless \(\abs{Z} \leq k\).

    \noindent
    To control the total interaction strength at each site, we set
    \begin{equation}
        \label{eq:def-interaction-1-norm}
        \tnorm{\Phi}_1
        \coloneq
        \sup_{u\in I} \, \sup_{x\in \Gamma} \, \sumstack[r]{Z\in \powerset(\Gamma)\suchthat\\ x\in Z} \, \norm{\Phi(Z,u)}
        .
    \end{equation}

    \noindent
    For a decay function~\(F\) on \(\Gamma\), we define the \emph{interaction norm}
    \begin{equation}
        \label{eq:interaction-norm}
        \norm{\Phi(t)}_F
        \coloneq
        \sup_{x,y \in \Gamma}
        \sumstack{Z\in \powerset(\Gamma)\\ x,y\in Z}
        \frac{\norm{\Phi(Z,t)}}{F\pdist{x,y}}
        ,
    \end{equation}
    and say that~\(\Phi\) has \emph{\(F\)-decay} if
    \begin{equation*}
        \tnorm{\Phi}_F
        \coloneq
        \sup_{t\in I} \, \norm{\Phi(t)}_F
        <
        \infty
        .
    \end{equation*}
\end{definition}

In comparison to the interactions used to describe Hamiltonian systems whose terms are elements of~\(\alg_Z\), the interaction terms \(\Phi(Z,t)\) we consider here are bounded operators on~\(\alg\).
This is in order to include dissipative dynamics and simplify the notation in the expansion.
The extra conditions in \cref{def:interaction} are to ensure that these more general operators act locally in a physically meaningful way.
Importantly, usual Hamiltonian and Lindbladian dynamics satisfy these assumptions.

\begin{remark}
    In the usual description of open quantum systems with Markovian dynamics, the interaction \(\Phi\) corresponds to a time-dependent dual Lindblad generator in the Heisenberg picture.
    The most relevant class of interactions satisfying our assumptions are those whose local terms \(\Phi(Z,t)\) are given as local Lindbladians in GKSL form
    \begin{equation*}
        \Phi(Z,t) \, A
        =
        \sum_{j\in J_Z} \I \, \commutator{H_j(Z,t), A} + L_j(Z,t)^* \, A \, L_j(Z,t) - \tfrac{1}{2} \, \acomm[\big]{L_j(Z,t)^* \, L_j(Z,t), A}
        ,
    \end{equation*}
    where \(J_Z\) is a finite index set.
    If the Hamiltonian terms \(H_j(Z,t) \in \alg\) and jump operators \(L_j(Z,t)\in \alg\) are bounded operators supported in \(Z\), property~\itemref{item:def-interaction:localized} is automatically fulfilled.
    Furthermore, operators in the above GKSL form with \(H_j(Z,t)\) self-adjoint are always completely dissipative, which implies property~\itemref{item:def-interaction:dissipative} (cf.~\cite[Theorem~1]{NVZ2011}).
    Property~\itemref{item:def-interaction:continuous} follows if for each \(j\in J_Z\) both \(H_j(Z,t)\) and \(L_j(Z,t)\) are norm-continuous and uniformly bounded in \(t\).
\end{remark}

For any finitely supported interaction, we can now define the time evolution by a standard Dyson expansion, for which we introduce the generalized \(n\)-simplex
\begin{equation}
    \label{eq:definition-simplex}
    \Delta^n_J
    \coloneq
    \Set[\big]{
        (u_1,\dotsc,u_n)\in J^n
        \given
        u_1 \leq u_2 \leq \dotsb \leq u_n
    }
\end{equation}
for all intervals \(J\subseteq \R\) and the shorthand \(\Delta^n_{s,t} \coloneq \Delta^n_{\intervalcc{s,t}}\) that will be used later on.
Recall that the volume of an \(n\)-simplex for a finite interval \(J\subseteq \R\) is given by \(\abs{\Delta^n_J} = \abs{J}^n / n!\).

\begin{definition}[Time evolution generated by finitely supported interactions]
    \label{def:time-evolution-finitely-supported-case}
    Let \(\Phi\) be a finitely supported interaction.
    The \emph{time evolution generated by \(\Phi\)} is the map
    \begin{equation*}
        \tau\colon \Delta_I^2 \to \calL(\alg), \, (s,t) \mapsto \tau_{s,t}
    \end{equation*}
    defined by the time-ordered exponential
    \begin{equation}
        \label{eq:def-time-evolution-finitely-supported-case}
        \tau_{s,t}
        \coloneq
        \sum_{n=0}^\infty
        \intlap_{\Delta^n_{s,t}} \odif{u}
        \, \sumstack[r]{Z_1\in \powerset(\Gamma)}
        \, \Phi(Z_1,u_1)
        \, \dotsm
        \, \sumstack[lr]{Z_n\in \powerset(\Gamma)}
        \, \Phi(Z_n,u_n)
        ,
    \end{equation}
    where the \(n=0\) term is defined to be \(\id_{\alg}\), the identity map on \(\alg\).
\end{definition}

Since the interaction in \cref{def:time-evolution-finitely-supported-case} is finitely supported, \(t \mapsto \sum_{Z\in \powerset(\Gamma)} \, \Phi(Z,t)\) is norm-continuous and uniformly bounded and hence~\eqref{eq:def-time-evolution-finitely-supported-case} is absolutely convergent.
Moreover, the time evolution satisfies the following standard properties, which we prove for the sake of completeness in \cref{app:proof-time-evolution-properties}.

\begin{restatable}{proposition}{finiteevolutions}
    \label{prop:finite-evolutions}
    Let \(\Phi\) be a finitely supported interaction and let \(\tau\) be its generated evolution.
    It holds that
    \begin{thmlist}
        \item
            \label{item:prop-finite-evolutions:cocycle}
            \(\tau_{s,t} \, \tau_{t,u} = \tau_{s,u}\) for all \((s,t,u) \in \Delta^3_I\),
        \item
            \label{item:prop-finite-evolutions:identity}
            \(\tau_{s,s} = \id\) for all \(s \in I\),
        \item
            \label{item:prop-finite-evolutions:continuity}
            \(\tau\) is norm-continuous in both arguments,
        \item
            \label{item:prop-finite-evolutions:derivative-second}
            \(\tau\) is norm-differentiable in the second argument and\newline \(
                \partial_{t} \, \tau_{s,t}
                =
                \sum_{Z\in \powerset(\Gamma)} \tau_{s,t} \, \Phi(Z,t)\;
            \) for all \((s,t)\in \Delta^2_I\),
        \item
            \label{item:prop-finite-evolutions:derivative-first}
            \(\tau\) is norm-differentiable in the first argument and\newline \(
                \partial_{s} \, \tau_{s,t}
                =
                -\sum_{Z\in \powerset(\Gamma)} \Phi(Z,s) \, \tau_{s,t}\;
            \) for all \((s,t)\in \Delta^2_I\),
        \item
            \label{item:prop-finite-evolutions:contractive}
            \(\norm{\tau_{s,t}} \leq 1\) for all \((s,t)\in \Delta^2_I\).
    \end{thmlist}
\end{restatable}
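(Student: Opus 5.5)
Write \(L(u) \coloneq \sum_{Z\in\powerset(\Gamma)} \Phi(Z,u)\). Since \(\Phi\) is finitely supported, this is a finite sum, so \(u\mapsto L(u)\) is norm-continuous and bounded by some \(M<\infty\) on \(I\). Each \(L(u)\) is dissipative, because a sum of bounded dissipative operators on a Banach space is dissipative. To see this, take the semi-inner-product (normalized tangent functional) characterization of dissipativity for bounded operators: \(K\) is dissipative iff for every \(A\) there is a norming functional \(\varphi\) with \(\operatorname{Re}\varphi(KA)\le 0\). For bounded \(K\) this holds for \emph{every} norming functional, so sums of dissipative operators are dissipative. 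Alternatively, use the Lie--Trotter formula \(\e^{t(K_1+K_2)}=\lim(\e^{tK_1/n}\e^{tK_2/n})^n\), which preserves contractivity. The \(n\)-th term of \eqref{eq:def-time-evolution-finitely-supported-case} is \(\int_{\Delta^n_{s,t}} L(u_1)\dotsm L(u_n)\,\odif{u}\), with norm at most \(M^n(t-s)^n/n!\). Hence the series converges absolutely, uniformly on compact subsets of \(\Delta^2_I\). This gives (ii) immediately, since only the \(n=0\) term survives when \(s=t\).

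For (iv) and (v), I would differentiate term by term, which is justified by the uniform convergence of the differentiated series. In the \(n\)-th term, the integration variable \(u_n\) has upper limit \(t\). Differentiating in \(t\) therefore gives \(\int_{\Delta^{n-1}_{s,t}} L(u_1)\dotsm L(u_{n-1})\,\odif{u}\; L(t)\), using continuity of \(L\). Summing over \(n\) gives \(\tau_{s,t}L(t)\). Similarly, \(u_1\) has lower limit \(s\), and differentiating in \(s\) yields \(-L(s)\tau_{s,t}\). Part (iii) follows from differentiability, or directly from the uniform bounds. For (i), fix \(s\le u\) and compare \(f(t)=\tau_{s,t}\tau_{t,u}\) for \(t\in[s,u]\). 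By (iv) and (v), \(f'(t)=\tau_{s,t}L(t)\tau_{t,u}-\tau_{s,t}L(t)\tau_{t,u}=0\), so \(f\) is constant. Its value at \(t=s\) is \(\tau_{s,u}\) by (ii). Alternatively, (i) follows combinatorially by splitting the simplex \(\Delta^n_{s,u}\) according to how many \(u_j\) lie below \(t\).

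The main step is (vi), contractivity, since the Dyson series itself only gives the bound \(\e^{M(t-s)}\). My plan is to approximate by products of semigroups. Take a partition \(s=t_0<\dots<t_N=t\) of mesh \(h\) and set \(P_N\coloneq \e^{hL(t_0)}\e^{hL(t_1)}\dotsm \e^{hL(t_{N-1})}\); the order matches \(\tau_{s,t}=\tau_{t_0,t_1}\dotsm\tau_{t_{N-1},t_N}\) from (i). Each factor \(\e^{hL(t_k)}\) is a contraction by Lumer--Phillips and the dissipativity of \(L(t_k)\) (hence the need for the first step). Comparing single steps, \(\norm{\tau_{t_k,t_{k+1}}-\e^{hL(t_k)}}\le h\,\omega(h)\,\e^{Mh}+O(h^2)\), where \(\omega\) is the modulus of continuity of \(L\) on \([s,t]\). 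This follows by comparing the Dyson series term by term: the first-order terms differ by \(\int_{t_k}^{t_{k+1}}(L(u)-L(t_k))\,\odif{u}\), and higher terms are \(O(h^2)\). A telescoping estimate then gives \(\norm{\tau_{s,t}-P_N}\le N\cdot o(h)\cdot \e^{M(t-s)}\to0\), using \(\norm{\tau}\le\e^{M(t-s)}\) and contractivity of the exponentials. Hence \(\norm{\tau_{s,t}}\le\limsup\norm{P_N}\le1\).

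A cleaner alternative for (vi) works with \(g(t)=\norm{(\lambda-\ldots)}\)-type estimates. Note that \(\tau_{s,t+h}=\tau_{s,t}(\id+hL(t))+o(h)\), and \(\norm{\id+hL}\le 1+o(h)\) for dissipative bounded \(L\), because \(\norm{\e^{hL}}\le1\). Hence \(t\mapsto\norm{\tau_{s,t}}\) has upper right Dini derivative \(\le0\), so it is non-increasing and bounded by its value \(1\) at \(t=s\). I would probably present this Dini-derivative version.
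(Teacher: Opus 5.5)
Your proposal is correct. Its skeleton matches the paper's: the Dyson-series bound \(M^n(t-s)^n/n!\), term-by-term differentiation for (iv) and (v), and contractivity from Lumer--Phillips applied to frozen generators. Several items are proved by a different route, however.

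\emph{Cocycle (i).} The paper proves the cocycle property combinatorially, by splitting \(\Delta^N_{s,u}\) into the pieces \(\Delta^n_{s,t}\times\Delta^{N-n}_{t,u}\), independently of any derivatives. You first prove (iv) and (v) and then conclude from \(\partial_t(\tau_{s,t}\,\tau_{t,u})=0\). This avoids the simplex bookkeeping. The price is that your differential argument cannot be reused for evolutions with discontinuous generators, whereas the paper reuses its combinatorial argument for the auxiliary evolution \(\tau^l\).

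\emph{Dissipativity of the sum.} The paper invokes Lumer--Phillips together with the Trotter product formula. Your duality-map argument is more elementary: for a bounded generator of a contraction semigroup, \(\operatorname{Re}\varphi(KA)\le 0\) holds for every norming functional \(\varphi\), as one sees from \(\operatorname{Re}\varphi(\e^{hK}A-A)\le 0\).

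\emph{Contractivity (vi).} The paper replaces the generator by a piecewise constant \(K^l\), needs the cocycle identity for \(\tau^l\), and obtains \(\tau^l\to\tau\) by dominated convergence. Your first plan instead telescopes \(\tau_{s,t}\) directly against the product of frozen exponentials, using the cocycle property of \(\tau\) itself. This avoids the auxiliary evolution and gives an explicit error of order \(\omega(h)+h\). Your preferred Dini-derivative version is even more local: it uses only (iv), the bound \(\norm{\id+hL(t)}\le 1+O(h^2)\), and the real-analysis fact that a continuous function with non-positive upper right Dini derivative is non-increasing. Continuity of \(t\mapsto\norm{\tau_{s,t}}\) is supplied by (iii).

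Two small clean-ups. In the Dini argument the function is simply \(g(t)=\norm{\tau_{s,t}}\); the ``\((\lambda-\ldots)\)-type'' phrasing looks like a leftover and should be removed. For (iii), separate differentiability in each argument gives only separate continuity. Joint continuity needs the locally uniform bounds on both partial derivatives, which your series estimates do provide.
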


Next, we define the time evolutions generated by general interactions as limits of restrictions to finite subsets.
Therefore, for any interaction~\(\Phi\) and set \(\Lambda\subseteq \Gamma\) we define the restricted interaction
\begin{equation}
    \label{eq:restricted-interaction}
    \Phi\rvert_\Lambda
    \colon
    (Z,t)
    \mapsto
    \begin{cases*}
        \Phi(Z,t)
          & if \(Z\subseteq \Lambda\),
        \\
        0 & otherwise.
    \end{cases*}
\end{equation}

\begin{definition}[Time evolution generated by general interactions]
    \label{def:time-evolution-general-case}
    Let \(\Phi\) be an interaction.
    For each \(\Lambda\in \powerset(\Gamma)\), consider the restricted interaction \(\Phi\rvert_\Lambda\) and let \(\tau^\Lambda\) be its generated evolution.
    Then, for all \(B\in \alg\) and \(s\),~\(t\in I\) with \(s \leq t\), the \emph{time evolution generated by \(\Phi\)} is defined pointwise as the limit of the net \((\tau_{s,t}^\Lambda \, B)_{\Lambda \in \powerset(\Gamma)}\) if it converges.
\end{definition}

Whenever the defining net converges for every \(B\in\alg\), the resulting map is linear and contractive, being a pointwise limit of the contractions \(\tau^\Lambda_{s,t}\) (\cref{item:prop-finite-evolutions:contractive}).
Moreover, for finitely supported \(\Phi\), the net is eventually constant, and hence the definition agrees with \cref{def:time-evolution-finitely-supported-case}.

\FloatBarrier

\subsection{Self-avoiding paths}

\begin{figure}
    \setlength{\ul}{.8cm}
    \centering
    \subcaptionbox{
        A sequence that is not a path.
        \label{subfig:sequence-not-path}
    }[.45\textwidth]{
        \begin{tikzpicture}
            \path[every node/.style={set=col_1}]
            (0,0) node {\(\gamma_0\)}
            ++(0:1.2) node {\(\gamma_1\)}
            ++(20:1.2) node {\(\gamma_2\)}
            +(40:1.2) node {\(\gamma_3\)}
            ++(-40:1.3) node {\(\gamma_4\)}
            ++(-10:1.2) node {\(\gamma_5\)}
            ;
        \end{tikzpicture}
    }
    \qquad
    \subcaptionbox{
        A path that is not self-avoiding.
        \label{subfig:non-self-avoiding-path}
    }[.45\textwidth]{
        \begin{tikzpicture}
            \path[every node/.style={set=col_1}]
            (0,0) node {\(\gamma_0\)}
            ++(-10:1.6) node[big] {\(\gamma_1\)}
            ++(40:1.7) node {\(\gamma_2\)}
            ++(-40:.3) node[big] (gamma3) {}
            ++(10:1.7) node {\(\gamma_4\)}
            ++(-10:1.2) node {\(\gamma_5\)}
            ;
            \path
            (gamma3.south east) node[above left, set label=col_1] {\(\gamma_3\)}
            ;
        \end{tikzpicture}
    }
    \\[\baselineskip]
    \subcaptionbox{
        A self-avoiding path that cannot be extended in a self-avoiding way.
        \label{subfig:self-avoiding-path-that-cannot-be-extended}
    }[.45\textwidth]{
        \begin{tikzpicture}
            \path[every node/.style={set=col_1}]
            (0,0) node {\(\gamma_0\)}
            ++(0:1.2) node {\(\gamma_1\)}
            ++(20:1.2) node {\(\gamma_2\)}
            ++(40:1.8) node[big] (gamma3) {}
            ++(120:.3) node {\(\gamma_4\)}
            ;
            \path
            (gamma3.south east) node[above left, set label=col_1] {\(\gamma_3\)}
            ;
        \end{tikzpicture}
    }
    \qquad
    \subcaptionbox{
        A self-avoiding path.
        \label{subfig:self-avoiding-path}
    }[.45\textwidth]{
        \begin{tikzpicture}[every node/.style={set=col_1}]
            \path
            (0,0) node {\(\gamma_0\)}
            ++(20:1.2) node {\(\gamma_1\)}
            ++(10:1.8) node[big] {\(\gamma_2\)}
            ++(-40:1.7) node {\(\gamma_3\)}
            ++(10:1.3) node {\(\gamma_4\)}
            ++(20:1.2) node {\(\gamma_5\)}
            ;
        \end{tikzpicture}
    }
    \caption{
        Examples of sequences to illustrate the definitions of paths and self-avoiding paths.
    }
    \label{fig:path-examples}
\end{figure}

To formulate the adaptive Dyson expansion we need to define self-avoiding paths.
We begin by defining the set of all sequences in \(\Gamma\) as
\begin{equation*}
    \Seq
    \coloneq
    \Set*{
        (\gamma_{j})_{j\in J}
        \given
        \begin{aligned}
            & J = \List{0,\dotsc,l} \ \text{for some}\ l \in \N_{0} \, \text{or} \, J=\emptyset,
            \\&\forall j \in J \colonpunct\ \gamma_j\in\powerset(\Gamma)
        \end{aligned}
    }
    ,
\end{equation*}
where we explicitly include the empty sequence \(( \, )\) as the family indexed by \(\emptyset\).
For each \(\gamma =(\gamma_j)_{j\in J} \in \Seq\) we denote by \(\ell_\gamma\) the maximal index of \(\gamma\) (i.e.\ \(\gamma=(\gamma_j)_{j\in \List{0,\dotsc,\ell_\gamma}}\)) with \(\ell_{( \, )} \coloneq -1\).
Sequences can also be written as tuples \(\gamma = (\gamma_0,\dotsc,\gamma_{\ell_\gamma})\) with a zero-indexing convention.
For every sequence \(\gamma\) and \(k \in \Z\) with \(-1 \leq k \leq \ell_\gamma\), we define
\begin{equation*}
    \gamma_{:k}
    \coloneq
    \begin{cases*}
        ( \, )                     & if \(k=-1\)
        \\
        (\gamma_0,\dotsc,\gamma_k) & if \(0\leq k \leq \ell_\gamma\).
    \end{cases*}
\end{equation*}

We use the following notation: for \(\gamma=(\gamma_j)_{j\in J} \in \Seq\) and \(Z\in \powerset(\Gamma)\) we define the sequence \(\gamma \oplus Z \coloneq (\gamma_0,\dotsc,\gamma_{\ell_\gamma},Z)\), and we write \(\gamma \perp Z\) if \(\gamma_j \cap Z = \emptyset\) for all \(j\in J\).

A sequence \(\gamma=(\gamma_j)_{j\in J}\) is said to be a \emph{path} if consecutive sets overlap, i.e.
\begin{equation*}
    \forall j \in J \colonpunct \quad j+1 \in J \implies \gamma_{j} \cap \gamma_{j+1} \neq \emptyset
    ,
\end{equation*}
and a path \(\gamma=(\gamma_j)_{j\in J}\) is called \emph{self-avoiding} if only consecutive sets overlap, i.e.
\begin{equation*}
    \forall j,k \in J \colonpunct \quad
    \gamma_j\cap\gamma_k\neq \emptyset \implies \abs{j-k} \in \List{0,1}
    .
\end{equation*}
We denote the set of all paths by \(\Path \subseteq \Seq\) and the set of self-avoiding paths by \(\SP\).
Some examples to illustrate these definitions are depicted in \cref{fig:path-examples}.
For \(n\in \N_0\) we define the set of (self-avoiding) paths that start at a given set~\(X \in \powerset(\Gamma)\) and contain \(n\) subsequent sets as
\begin{equation*}
    \Path_n(X)
    \coloneq
    \Set[\big]{
        \gamma \in \Path
        \given
        \ell_\gamma = n, \
        \gamma_0=X
    }
\end{equation*}
and \(\SP_n(X) \coloneq \Path_n(X) \cap \SP\).
Likewise, we define the set of (self-avoiding) paths that connect two given sets~\(X\),~\(Y \in \powerset(\Gamma)\) via \(n\) intermediate sets as
\begin{equation*}
    \Path_n(X,Y)
    \coloneq
    \Set[\big]{
        \gamma \in \Path
        \given
        \ell_\gamma = n+1, \
        \gamma_0=X,\
        \gamma_{n+1}=Y
    }
\end{equation*}
and \(\SP_n(X,Y) \coloneq \Path_n(X,Y) \cap \SP\).

\section{Results}
\label{sec:results}

Our main result is an adaptive Dyson expansion, which allows to exactly rewrite the time evolution in terms of self-avoiding paths.
The proof is given in \cref{sec:proof-adaptive-dyson-expansion}.

\begin{theorem}[Adaptive Dyson expansion]
    \label{thm:adaptive-dyson-expansion}
    Let \(\Phi\) be an interaction satisfying
    \begin{equation}
        \label{eq:weighted-interaction-summability}
        C_\Phi
        \coloneq
        \sup_{t\in I} \, \sup_{x\in\Gamma}
        \, \sumstack[r]{Z\in\powerset(\Gamma)\\ x\in Z}
        \, \abs{Z} \, \norm{\Phi(Z,t)}
        <
        \infty
        .
    \end{equation}
    Then \(\Phi\) generates a time evolution~\(\tau\) in the sense of \cref{def:time-evolution-general-case}.
    For every self-avoiding path \(\eta \in\SP\), the interaction
    \begin{equation}
        \label{eq:thm-path-dependent-interactions}
        \Phi^{\eta}(Z,u)
        \coloneq
        \begin{cases*}
            \Phi(Z,u) & if \(\eta \perp Z\)
            \\
            0         & otherwise
        \end{cases*}
    \end{equation}
    also generates a time evolution, denoted by~\(\tau^{\eta}\).

    For all \(s\),~\(t \in I\) with \(s \leq t\), \(X\),~\(Y\in \powerset(\Gamma)\), \(K \in \calB_X\) and \(B\in \alg_Y\), we have the following absolutely convergent expansion
    \begin{equation}
        \label{eq:thm-adaptive-dyson-expansion}
        K \, \tau_{s,t} \, B
        =
        K
        \, \sum_{n=0}^\infty
        \, \intlap_{\Delta^n_{s,t}} \odif{u}
        \, \sumstack[r]{\gamma\in \SP_n(X,Y)}
        \,
        \begin{aligned}[t]
            \tau_{s,u_1}
            \, \Phi(\gamma_1,u_1)
            & \, \tau^{\gamma_{:0}}_{u_1,u_2}
            \, \Phi(\gamma_2,u_2)
            \\&\dotsm
            \, \tau^{\gamma_{:n-2}}_{u_{n-1},u_n}
            \, \Phi(\gamma_{n},u_n)
            \, \tau^{\gamma_{:n-1}}_{u_n,t}
            \, B
            ,
        \end{aligned}
    \end{equation}
    where the \(n=0\) term is defined to be \(K \, \sum_{\gamma\in \SP_0(X,Y)} \, \tau_{s,t} \, B\).

    Furthermore, if \(X\) and \(Y\) are disjoint, then
    \begin{equation}
        \label{eq:thm-adaptive-dyson-expansion-norm-bound}
        \norm{K \, \tau_{s,t} \, B}
        \leq
        \norm{K} \, \norm{B}
        \, \sum_{n=1}^{\infty}
        \, \sumstack{\gamma\in\SP_n(X,Y)}
        \, \intlap_{\Delta^n_{s,t}} \odif{u}
        \, \prod_{j=1}^n
        \, \norm{\Phi(\gamma_j,u_j)}
        .
    \end{equation}
\end{theorem}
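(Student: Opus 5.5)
I would first prove everything for finitely supported interactions and then pass to the thermodynamic limit. For finitely supported \(\Phi\), all evolutions \(\tau\) and \(\tau^\eta\) are given by \cref{def:time-evolution-finitely-supported-case} and satisfy \cref{prop:finite-evolutions}.

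\emph{Basic Duhamel step.} Let \(\eta\in\SP\) and \(B\in\alg_Y\). Split the generator of \(\tau^{\eta}\) into the terms with \(Z\perp\eta\oplus Y\) and the remaining terms \(Z\perp\eta\), \(Z\cap Y\neq\emptyset\). Differentiating \(u\mapsto\tau^{\eta}_{r,u}\,\tau^{\eta\oplus Y}_{u,t}B\) with \cref{item:prop-finite-evolutions:derivative-second,item:prop-finite-evolutions:derivative-first} and integrating gives
\begin{equation*}
    \tau^{\eta}_{r,t}B
    =\tau^{\eta\oplus Y}_{r,t}B
    +\int_r^t \odif{u}\sumstack{Z\perp\eta\\ Z\cap Y\neq\emptyset}\tau^{\eta}_{r,u}\,\Phi(Z,u)\,\tau^{\eta\oplus Y}_{u,t}B .
\end{equation*}
The generator of \(\tau^{\eta\oplus Y}\) consists of terms disjoint from \(Y\). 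Each such term is localized away from \(Y\) and hence annihilates \(\alg_Y\), so \(\tau^{\eta\oplus Y}_{u,t}B=B\) by the Dyson series. Moreover, \(\tau^{\eta\oplus Y}\) leaves every \(\alg_W\) invariant as long as its generator terms either avoid \(W\) or have support inside \(W\). This property will be needed for later steps.

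\emph{Iteration.} Start with \(\eta=()\), so \(\tau^{()}=\tau\). Consider \(K\tau_{s,t}B\). If \(X\cap Y\neq\emptyset\), keep the \(n=0\) term \(K\tau_{s,t}B\). Otherwise \(K\) annihilates \(\alg_Y\), since \(K\in\calB_X\), and the free term vanishes. The integral term then produces \(\Phi(\gamma_n,u)\) with \(\gamma_n\cap Y\neq\emptyset\).

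To match the orientation of the stated formula, it is more convenient to run the expansion from the \(K\) side. Write \(\tau_{s,t}=\tau_{s,t}^{()}\) and compare \(\tau^{\gamma_{:k-1}}_{u_k,t}\) with \(\tau^{\gamma_{:k}}_{u_k,t}\) applied to \(B\). The Duhamel step above, with the roles of the splitting variables swapped, gives
\begin{equation*}
    \tau^{\gamma_{:k-1}}_{u,t}B=\tau^{\gamma_{:k}}_{u,t}B+\int_u^t \odif{v}\sumstack{Z\perp\gamma_{:k-1},\ Z\cap\gamma_k\neq\emptyset}\tau^{\gamma_{:k-1}}_{u,v}\,\Phi(Z,v)\,\tau^{\gamma_{:k}}_{v,t}B .
\end{equation*}
The term \(\tau^{\gamma_{:k}}\) is then placed in front, and the set \(Z=\gamma_{k+1}\) is appended.

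The selection rule comes from locality. The operator to the left, \(\Phi(\gamma_k,u_k)\) or \(K\), is localized in \(\gamma_k\) and multiplies a term that is supported away from \(\gamma_k\) unless a path connects to \(Y\). This forces the surviving paths to be self-avoiding and to terminate when \(\gamma_{n}\cap Y\neq\emptyset\), at which point the remainder is \(\tau^{\gamma_{:n-1}}_{u_n,t}B\).

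The bookkeeping I would verify by induction on the order \(N\) is the following. The \(N\)-th partial expansion equals the stated sum over \(n<N\) plus a remainder of the same form with \(n=N\), where the remaining set \(\gamma_N\) does not intersect \(Y\). Paths that intersect an earlier set vanish because the adaptive evolution has already removed those terms.

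\emph{Convergence and the infinite-volume limit.} Every propagator is contractive. Paths in \(\SP_n(X,\cdot)\) are counted by iteratively choosing \(\gamma_{j+1}\) meeting \(\gamma_j\), so the weight is at most \(\sum_{x\in\gamma_j}\sum_{Z\ni x}\norm{\Phi(Z)}\). Bounding this through \(\abs{Z}\) and \eqref{eq:weighted-interaction-summability} gives a total of order \(\abs{X}\,C_\Phi^n(t-s)^n/n!\). The remainder therefore tends to zero and the series converges absolutely. The norm bound \eqref{eq:thm-adaptive-dyson-expansion-norm-bound} then follows from the triangle inequality, since the \(n=0\) term vanishes when \(X\cap Y=\emptyset\).

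For general \(\Phi\), I would apply the finite result to \(\Phi|_{\Lambda'}\) and \(\Phi|_\Lambda\), with \(K=\) commutator-type maps in the Duhamel formula
\begin{equation*}
    \tau^{\Lambda'}B-\tau^{\Lambda}B=\int\tau^{\Lambda'}\sum_{Z\not\subseteq\Lambda}\Phi(Z)\,\tau^\Lambda B .
\end{equation*}
The bound above applied with \(K=\Phi(Z)\) controls each term by paths from \(Z\) to \(Y\). Summability then shows that the net is Cauchy, which gives existence of \(\tau\) and likewise of \(\tau^\eta\), since \(\Phi^\eta\) satisfies the same hypothesis. Dominated convergence, using the uniform path-sum majorant, passes the expansion to the limit.

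I expect the main obstacle to be exactly this last step: justifying the termwise limit of the adaptive evolutions \(\tau^{\gamma_{:k},\Lambda}\) uniformly in the paths, together with the counting estimate that uses \(\abs{Z}\) to control the branching.
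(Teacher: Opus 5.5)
Your proposal is correct and follows the paper's proof essentially step by step: the iterated Duhamel comparison of $\tau^{\gamma_{:k-1}}$ with $\tau^{\gamma_{:k}}$, and the vanishing of the free term $\Phi(\gamma_k,u_k)\,\tau^{\gamma_{:k}}_{u_k,t}B$ by locality (this needs the whole path $\gamma_{:k}\perp Y$, not just $\gamma_k$, which your induction does maintain). The remaining steps also match: the $\abs{Z}$-weighted path-summability bound, and the Duhamel/Cauchy argument plus dominated convergence for the thermodynamic limit. The points to tighten are these: the $B$-side ``basic Duhamel step'' is an unnecessary detour; in the Cauchy step the sum over all $Z\not\subseteq\Lambda$ must be controlled by anchoring the paths at $Y$ (the paper maps $(Z,\gamma)$ with $\gamma\in\Path_n(Z,Y)$ bijectively to $\Path_{n+1}(Y)$), not by summability started from each $Z$; and the final limit needs only pointwise convergence of $\tau^{\Lambda,\eta}$ via $(\Phi\rvert_\Lambda)^{\eta}=(\Phi^{\eta})\rvert_\Lambda$, not uniformity in the paths.
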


Notice that~\eqref{eq:thm-adaptive-dyson-expansion-norm-bound} follows directly from~\eqref{eq:thm-adaptive-dyson-expansion} by taking norms, applying the triangle inequality, and using contractivity of the time evolutions.
Moreover, condition~\eqref{eq:weighted-interaction-summability} is satisfied for interactions with \(F\)-decay by
\(
    C_\Phi
    \leq
    \norm{F} \, \tnorm{\Phi}_F
\)
and for \(k\)-body interactions by
\(
    C_\Phi
    \leq
    k \, \tnorm{\Phi}_1
\).

As mentioned in the introduction, a similar expansion was first proven in~\cite{CL2021} for time-independent Hamiltonian quantum systems in finite volume with an involved proof, in which the evolution is first completely expanded using the exponential series and then resummed to obtain~\eqref{eq:thm-adaptive-dyson-expansion}.
In comparison to our expansion, the in-between evolutions in~\cite{CL2021} have a slightly more complicated structure and are generated by interactions~\(\Phi^\gamma\) that have non-zero terms supported within the sets of the path.
Compared to~\cite[Appendix~A.3]{MLL2025}, we provide a more systematic presentation of the derivation of the self-avoiding path expansion and treat a much more general class of dynamics including non-GKSL dissipative local generators and the thermodynamic limit.
We furthermore examine which interaction terms can occur along the self-avoiding paths.
This analysis identifies the terms that enter the resulting Lieb-Robinson bounds and makes explicit why on-site terms and terms supported entirely within either endpoint set do not contribute.

Estimating the sum over self-avoiding paths in~\eqref{eq:thm-adaptive-dyson-expansion-norm-bound}, using either the \(F\)-decay of the interactions or their \(k\)-body and finite-range structure, we obtain Lieb-Robinson bounds.
While the standard proofs of Lieb-Robinson bounds, see e.g.~\cite{NSY2019,NVZ2011}, sum over all paths, restricting to self-avoiding paths can sharpen the bounds and allows the estimates to be adapted to the system under consideration.
We present several applications of this flexibility below, with proofs in \cref{sec:proofs-Lieb-Robinson-bounds}.

As observed in~\cite{CLY2023}, self-avoidance yields much stronger Lieb-Robinson bounds for nearest-neighbour interactions in one-dimensional systems.

\begin{corollary}
    \label{cor:Lieb-Robinson-bound-nearest-neighbour-spin-chain}
    Suppose \(\Gamma=\Z\) with the \(\ell^1\)-distance.
    Let \(\Phi\) be a nearest-neighbour interaction, i.e.\ \(\Phi(Z,u)=0\) unless \(\diam{Z} \leq 1\), and assume that \({\tnorm{\Phi}_1 < \infty}\).
    Then \(\Phi\) generates a time evolution~\(\tau\).

    For non-empty finite sets \(X\),~\(Y\subseteq\Z\), write \(X<Y\) if \(\max X<\min Y\).
    For all such \(X\),~\(Y\) with \(X<Y\) or \(Y<X\), all \(s\),~\(t\in I\) with \(s\leq t\), \(K\in\calB_X\), and \(B\in\alg_Y\), it holds that
    \begin{equation*}
        \norm{K \, \tau_{s,t} \, B}
        \leq
        \norm{K}
        \, \norm{B}
        \, \frac{
            \paren[\big]{
                J \, (t-s)
            }^{\dist{X,Y}}
        }{
            \dist{X,Y}!
        }
        ,
    \end{equation*}
    where \(
        J \coloneq \sup_{u\in \intervalcc{s,t}}
        \, \sup_{z\in \Lambda}
        \, \norm[\big]{\Phi(\List{z,z+1}, u)}
    \) and
    \begin{equation*}
        \Lambda
        =
        \text{\small\(\displaystyle
                \begin{cases*}
                    \List{\max X, \max X + 1, \dotsc, \min Y -1}
                     & if \(X<Y\),
                    \\
                    \List{\max Y, \max Y + 1, \dotsc, \min X -1}
                     & if \(Y<X\).
                \end{cases*}
            \)}
    \end{equation*}
\end{corollary}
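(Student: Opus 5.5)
We derive the corollary from \cref{thm:adaptive-dyson-expansion}, specifically from the norm bound \eqref{eq:thm-adaptive-dyson-expansion-norm-bound}.

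\emph{Existence of the evolution.} A nearest-neighbour interaction on \(\Z\) has \(\Phi(Z,u)=0\) unless \(\abs{Z}\leq 2\). It is therefore a \(2\)-body interaction, and \(C_\Phi\leq 2\,\tnorm{\Phi}_1<\infty\) by the remark after the theorem. Hence \(\tau\) exists.

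\emph{Reduction to counting paths.} By symmetry we assume \(X<Y\). Then \(X\cap Y=\emptyset\), so \eqref{eq:thm-adaptive-dyson-expansion-norm-bound} applies. The only sets carrying nonzero terms are \(\emptyset\), singletons \(\List{z}\), and pairs \(\List{z,z+1}\). The empty set never occurs in a path, since consecutive sets must intersect. The task is to show that every self-avoiding path \(\gamma\in\SP_n(X,Y)\) whose intermediate sets all have nonzero interaction is either exactly
\[
    (X,\List{a,a+1},\List{a+1,a+2},\dotsc,\List{b-1,b},Y),
    \qquad a=\max X,\ b=\min Y,
\]
with \(n=b-a=\dist{X,Y}\), or contributes zero.

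\emph{Key combinatorial step.} Take a self-avoiding path \(X=\gamma_0,\gamma_1,\dotsc,\gamma_n,\gamma_{n+1}=Y\) whose intermediate sets are singletons or nearest-neighbour pairs.

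First, \(\gamma_1\) meets \(X\). If \(n\geq 1\), then \(\gamma_1\) cannot be contained in \(X\). Otherwise \(\gamma_2\), which meets \(\gamma_1\subseteq X\), would also meet \(\gamma_0\). This is forbidden when \(n\geq2\); when \(n=1\), it would force \(\gamma_1\cap Y\neq\emptyset\) while \(\gamma_1\subseteq X\), which is impossible. The same argument shows that no intermediate set \(\gamma_j\) can be contained in \(\gamma_{j-1}\).

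Since each \(\gamma_j\) has at most two points, each intermediate set is therefore a pair \(\List{z,z+1}\) sharing exactly one point with \(\gamma_{j-1}\), plus one new point. Self-avoidance forbids that new point from lying in any earlier set. Because \(X\) is to the left of \(Y\), the only way out of \(X\) is \(\gamma_1=\List{a,a+1}\) with \(a+1\notin X\); leftward pairs lie in the convex hull of \(X\) or to its left.

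I would argue by induction that each step must move one site to the right. Moving left would put the new point back into an earlier pair, violating self-avoidance. Any pair lying to the left of \(a\) would have to meet \(X\), again forbidden. The path can only end once some \(\gamma_n\) meets \(Y\), and \(\gamma_n\) must avoid \(\gamma_{k}\) for \(k<n-1\). This forces \(\gamma_n=\List{b-1,b}\) and \(n=b-a\).

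I expect this induction to be the main obstacle. One must handle carefully that \(X\) and \(Y\) need not be intervals, for instance that a pair could touch \(X\) at an interior gap. I would resolve this using \(X<Y\) together with the fact that every pair \(\List{z,z+1}\) with \(z<a\) that is not contained in \(X\) either meets \(X\) or lies entirely to the left of \(\max X\). Since the path can never return to \(X\), such pairs can never be used to reach \(Y\).

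\emph{Conclusion.} With the unique path identified, each factor satisfies \(\norm{\Phi(\List{z,z+1},u_j)}\leq J\) because \(z\in\Lambda\). The simplex has volume \((t-s)^n/n!\). Substituting into \eqref{eq:thm-adaptive-dyson-expansion-norm-bound} yields \(\norm{K}\,\norm{B}\,(J(t-s))^{\dist{X,Y}}/\dist{X,Y}!\).
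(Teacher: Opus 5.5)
Your proposal is correct and follows the paper's proof. Existence comes from \(C_\Phi \leq 2\,\tnorm{\Phi}_1\), and the bound \eqref{eq:thm-adaptive-dyson-expansion-norm-bound} then reduces to the single self-avoiding path \((X,\List{a,a+1},\dotsc,\List{b-1,b},Y)\) with \(n=\dist{X,Y}\) and weight at most \(J^n\,(t-s)^n/n!\); your combinatorial analysis is in fact more detailed than the paper's one-line uniqueness claim.

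When writing it up, tighten two places. First, the step \(\gamma_1=\List{a,a+1}\) is forced because no \(\gamma_j\) with \(j\geq 2\) may meet \(X\), and overlapping nearest-neighbour pairs cannot jump over the site \(a=\max X\); other first steps do exist, they are just dead ends. Second, the chain must stop at \(\List{b-1,b}\) because \(Y=\gamma_{n+1}\) may not meet any \(\gamma_k\) with \(k<n\), not because \(\gamma_n\) avoids \(\gamma_k\) for \(k<n-1\).
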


The proof uses the fact that there is at most one self-avoiding path from \(X\) to \(Y\) with non-zero contribution in~\eqref{eq:thm-adaptive-dyson-expansion-norm-bound}, and its intermediate sets consist of two neighbouring sites.
The complete argument is provided in \cref{sec:proof-Lieb-Robinson-bound-nearest-neighbour-spin-chain}.

We next introduce some notation and state Lieb-Robinson bounds in more general settings.
For a decay function~\(F\) on \(\Gamma\), an interaction~\(\Phi\) with \(F\)-decay, and \(s\),~\(t\in I\) with \(s\leq t\), set
\begin{equation*}
    I_\Phi(s,t)
    \coloneq
    \intlap_s^t \odif{u} \, \norm{\Phi(u)}_F
    \leq
    \tnorm{\Phi}_F \, (t-s)
    .
\end{equation*}
For finite sets \(X\),~\(Y\subseteq\Gamma\), let \(\tilde{\Phi}\) be the interaction obtained by removing the on-site terms and the terms supported entirely in \(X\) or \(Y\).
More precisely,
\begin{equation}
    \label{eq:thm-interaction-without-on-site-terms}
    \tilde{\Phi}(Z,t)
    \coloneq
    \begin{cases*}
        0
         & if \(Z\subseteq X\), \(Z\subseteq Y\), or \(\abs{Z}=1\),
        \\
        \Phi(Z,t)
         & otherwise.
    \end{cases*}
\end{equation}

With this notation, we can now state the following result, which is a direct consequence of the adaptive Dyson expansion.

\begin{theorem}[Lieb-Robinson bound]
    \label{thm:Lieb-Robinson-bound}
    Let \(F\) be a decay function on \(\Gamma\), and let \(\Phi\) be an interaction with \(F\)-decay.
    Let \(\tau\) be its generated evolution.
    For all \(s\),~\(t\in I\) with \(s\leq t\), disjoint \(X\),~\(Y\in\powerset(\Gamma)\), \(K\in\calB_X\), and \(B\in\alg_Y\), it holds that
    \begin{equation}
        \label{eq:thm-Lieb-Robinson-bound}
        \norm{K \, \tau_{s,t} \, B}
        \leq
        \norm{K} \, \norm{B} \, C_F^{-1}
        \, \paren[\big]{\e^{C_F \, I_{\tilde{\Phi}}(s,t)}-1}
        \, \sum_{x\in X} \, \sum_{y\in Y} \, F\pdist{x,y}
        .
    \end{equation}
\end{theorem}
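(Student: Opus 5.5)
We will derive \eqref{eq:thm-Lieb-Robinson-bound} from the norm bound \eqref{eq:thm-adaptive-dyson-expansion-norm-bound} of \cref{thm:adaptive-dyson-expansion}. Condition \eqref{eq:weighted-interaction-summability} holds because \(C_\Phi \leq \norm{F}\,\tnorm{\Phi}_F\).

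The first step is to replace \(\Phi\) by \(\tilde\Phi\) in that bound. Take a self-avoiding path \(\gamma\in\SP_n(X,Y)\) with \(n\geq 1\) and \(X\cap Y=\emptyset\). Every intermediate set \(\gamma_j\), \(1\le j\le n\), meets \(\gamma_{j-1}\) and \(\gamma_{j+1}\). These two sets are disjoint, since they are non-consecutive and \(\gamma_0\cap\gamma_2=\emptyset\) etc.; for \(n=1\) this is just \(X\cap Y=\emptyset\). Hence \(\abs{\gamma_j}\geq 2\). Moreover \(\gamma_j\not\subseteq X\): for \(j=1\) because \(\gamma_1\) meets \(\gamma_2\), which is disjoint from \(X=\gamma_0\) (or \(\gamma_2=Y\) when \(n=1\)); for \(j\ge 2\) because \(\gamma_j\cap\gamma_0=\emptyset\). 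Symmetrically \(\gamma_j\not\subseteq Y\). Thus \(\Phi(\gamma_j,u)=\tilde\Phi(\gamma_j,u)\), and the right-hand side of \eqref{eq:thm-adaptive-dyson-expansion-norm-bound} may be written with \(\tilde\Phi\).

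Next, drop self-avoidance and keep only what is needed. Any \(\gamma\in\SP_n(X,Y)\) admits points \(x_0\in X\), \(x_j\in\gamma_j\cap\gamma_{j+1}\) and \(x_{n}\in Y\) with \(x_0,x_1\in\gamma_1\), \(x_{j-1},x_j\in\gamma_j\), and so on. Bounding the sum over paths by a sum over such point chains gives
\begin{equation*}
    \sum_{\gamma\in\SP_n(X,Y)} \prod_{j=1}^n \norm{\tilde\Phi(\gamma_j,u_j)}
    \le
    \sum_{x\in X}\sum_{y\in Y}\sum_{z_1,\dotsc,z_{n-1}\in\Gamma}
    \prod_{j=1}^n \sumstack{Z\ni z_{j-1},z_j} \norm{\tilde\Phi(Z,u_j)},
\end{equation*}
with \(z_0=x\) and \(z_n=y\). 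Each inner sum is at most \(\norm{\tilde\Phi(u_j)}_F\,F(\dist{z_{j-1},z_j})\), and \(n-1\) applications of \eqref{eq:convolution-constant} collapse the chain to \(C_F^{n-1}F(\dist{x,y})\).

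The main subtlety in this step is that the assignment of points must be injective enough to avoid undercounting issues. Since we only need an upper bound, it suffices to note that each \(\gamma\) contributes to at least one point chain and that all terms are nonnegative, so the inequality holds.

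Finally, the time integral over the simplex gives
\begin{equation*}
    \int_{\Delta^n_{s,t}}\prod_{j}\norm{\tilde\Phi(u_j)}_F\,\mathrm{d}u = \frac{I_{\tilde\Phi}(s,t)^n}{n!},
\end{equation*}
since the integrand is symmetric. Summing over \(n\geq1\) yields \(C_F^{-1}\paren{\e^{C_F I_{\tilde\Phi}(s,t)}-1}\sum_{x\in X}\sum_{y\in Y}F(\dist{x,y})\), which is \eqref{eq:thm-Lieb-Robinson-bound}. Exchanging sums and integrals is justified by Tonelli, as all terms are nonnegative. Measurability of \(u\mapsto\norm{\tilde\Phi(u)}_F\) follows from \cref{item:def-interaction:continuous}, since it is a supremum of countably many continuous functions.
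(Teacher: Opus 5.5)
Your proposal is correct and follows the paper's proof. You use self-avoidance to see that every intermediate set carries \(\tilde{\Phi}\), bound the path sum by chains of points chosen in consecutive intersections, collapse the chain with \(n-1\) applications of the convolution constant \(C_F\), and integrate the symmetric integrand over the simplex. The differences are cosmetic: you run the point-chain estimate directly on \(\SP_n(X,Y)\) instead of first enlarging to \(\Path_n(X,Y)\) and citing \cref{lem:bounding-sum-over-n-long-paths}, and you spell out the \(\Phi\to\tilde{\Phi}\) replacement and the measurability of \(u\mapsto\norm{\tilde{\Phi}(u)}_F\), which the paper leaves implicit.
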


The appearance of \(\tilde{\Phi}\) rather than \(\Phi\) in~\eqref{eq:thm-Lieb-Robinson-bound} is a consequence of the self-avoidance condition in~\eqref{eq:thm-adaptive-dyson-expansion-norm-bound}.
Indeed, neither on-site terms nor terms supported entirely in \(X\) or \(Y\) can occur among the intermediate sets of a self-avoiding path connecting disjoint \(X\) and \(Y\).
The bound is therefore independent of all such terms.
To our knowledge, this is the first proof of Lieb-Robinson bounds for irreversible dynamics that are independent of the on-site dissipative terms.

\begin{remark}
    \label{rem:square-lattice-Lieb-Robinson-bound}
    Consider again \(\Gamma = \Z^D\) equipped with the \(\ell^1\)-distance, and fix \(\alpha > D\).
    We use the decay functions introduced in \cref{rem:square-lattice-decay-function}.
    There exists a constant \(C\) such that
    \begin{equation}
        \label{eq:rem-tail-sum-on-square-lattice}
        \sumstack[r]{
            y\in \Gamma \suchthat\\
            \dist{x,y} \geq r
        } \, F_\alpha\pdist{x,y}
        \leq
        C \, F_{\alpha-D}(r)
        ,
    \end{equation}
    for all \(x\in \Gamma\) and \(r \geq 0\).
    Hence, the double sum in~\eqref{eq:thm-Lieb-Robinson-bound} is bounded as
    \begin{equation*}
        \sum_{x\in X} \, \sum_{y\in Y} \, F_\alpha\pdist{x,y}
        \leq
        C \, \min\List{\abs{X},\abs{Y}} \, F_{\alpha-D} \pdist{X,Y}
    \end{equation*}
    by applying~\eqref{eq:rem-tail-sum-on-square-lattice} to either of the sums.

    For the exponential decay \(F(r) = F_\alpha(r) \, \e^{-b r}\) the same argument produces an additional factor \(\e^{-b \, \dist{X,Y}}\), so that~\eqref{eq:thm-Lieb-Robinson-bound} gives
    \begin{equation*}
        \norm{K \, \tau_{s,t} \, B}
        \leq
        \norm{K} \, \norm{B} \, C \, C_F^{-1} \, \min\List{\abs{X},\abs{Y}} \, \e^{b \, \paren*{\xmathstrut{.1}v \, (t-s) - \dist{X,Y}}}
        ,
    \end{equation*}
    which resembles the form~\eqref{eq:LRB-short-range-intro} with velocity \(v = C_F \, \tnorm{\tilde{\Phi}}_F / b\).
\end{remark}

We next consider finite-range \(k\)-body interactions with \(\tnorm{\Phi}_1<\infty\).
As in the \(F\)-decay case, \cref{thm:adaptive-dyson-expansion} gives the existence of the dynamics and the adaptive Dyson expansion, from which we derive a Lieb-Robinson bound.

\begin{theorem}[Lieb-Robinson bounds for finite-range interactions]
    \label{thm:finite-range-Lieb-Robinson-bound}
    Let \(k\in\N_{>0}\), \(R>0\) and \(\Phi\) be a \(k\)-body interaction of finite range~\(R\), and assume that \(\tnorm{\Phi}_1<\infty\).
    Let \(\tau\) be its generated evolution.

    For all \(s\),~\(t\in I\) with \(s\leq t\), disjoint non-empty finite sets \(X\),~\(Y\subseteq\Gamma\), \(K\in\calB_X\), and \(B\in\alg_Y\), it holds that
    \begin{equation}
        \label{eq:thm-finite-range-Lieb-Robinson-bound}
        \norm{K \, \tau_{s,t} \, B}
        \leq
        \norm{K} \, \norm{B}
        \, \min\List{\abs{X},\abs{Y}}
        \, k^{-1}
        \, \paren[\big]{\e^{v \, (t-s)} - 1}
        \, \e^{- \dist{X,Y}/R}
        ,
    \end{equation}
    where \(v \coloneq \e \, k \, \tnorm{\tilde{\Phi}}_1\) and \(\tilde{\Phi}\) is defined in~\eqref{eq:thm-interaction-without-on-site-terms}.
\end{theorem}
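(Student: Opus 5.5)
The starting point is the norm bound~\eqref{eq:thm-adaptive-dyson-expansion-norm-bound} from \cref{thm:adaptive-dyson-expansion}; it applies because \(C_\Phi \leq k \, \tnorm{\Phi}_1<\infty\) for \(k\)-body interactions. The first step is to replace \(\Phi\) by \(\tilde{\Phi}\) in that bound. In a self-avoiding path \(\gamma\in\SP_n(X,Y)\) with \(X\cap Y=\emptyset\), no intermediate set \(\gamma_j\), \(1\le j\le n\), can be a singleton. A singleton meeting both \(\gamma_{j-1}\) and \(\gamma_{j+1}\) would force \(\gamma_{j-1}\cap\gamma_{j+1}\neq\emptyset\); this violates self-avoidance, or, when \(n=1\), the disjointness of \(X\) and \(Y\). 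Similarly, \(\gamma_j\subseteq X\) is impossible for \(j\geq 2\), because \(\gamma_j\) would meet \(\gamma_0=X\). For \(j=1\), \(\gamma_1\subseteq X\) would make \(\gamma_2\) (or \(Y\) if \(n=1\)) meet \(X\). The case \(\gamma_j\subseteq Y\) is symmetric. Hence \(\norm{\Phi(\gamma_j,u_j)} = \norm{\tilde{\Phi}(\gamma_j,u_j)}\) along every contributing path. After integrating over the simplex we get
\[
\norm{K\,\tau_{s,t}\,B}\le \norm{K}\,\norm{B}\sum_{n\ge1}\frac{(t-s)^n}{n!}\sum_{\gamma\in\SP_n(X,Y)}\prod_{j=1}^n\sup_{u}\norm{\tilde\Phi(\gamma_j,u)}.
\]

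The second step bounds the path sum combinatorially. We only need that a path is a path. Drop the constraint \(\gamma_{n+1}=Y\) but keep the requirement that \(\gamma_n\) meets \(Y\), and sum iteratively from the start. Choosing \(\gamma_1\) means choosing a site \(x\in X\) that \(\gamma_1\) contains, and then \(\gamma_1\ni x\); this gives a factor \(\abs{X}\,\tnorm{\tilde\Phi}_1\). Each next set must meet \(\gamma_{j-1}\), which has at most \(k\) sites, so it contributes at most \(k\,\tnorm{\tilde\Phi}_1\). The result is \(\abs{X}\,k^{n-1}\tnorm{\tilde\Phi}_1^n\). The role of \(X\) and \(Y\) can be exchanged by summing from the end instead, which yields the \(\min\{\abs X,\abs Y\}\) factor. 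The finite range enters through a distance constraint. For a path reaching \(Y\), the triangle inequality gives \(\dist{X,Y}\le\sum_{j}\diam{\gamma_j}\le nR\). So only \(n\geq n_0\coloneq\ceil{\dist{X,Y}/R}\) contribute. (If consecutive sets overlap, the distance from \(X\) to a point of \(\gamma_n\) is at most \(nR\), and \(\gamma_n\) meets \(Y\).)

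The final step is to combine these into
\[
\min\{\abs X,\abs Y\}\,k^{-1}\sum_{n\ge\max(1,n_0)}\frac{(k\tnorm{\tilde\Phi}_1(t-s))^n}{n!}.
\]
It remains to convert the restriction \(n\ge n_0\) into \(\e^{-\dist{X,Y}/R}\) at the cost of the factor \(\e\) in the velocity. For \(n\geq n_0\) we have \(1\le \e^{n-\dist{X,Y}/R}\), so each term can be multiplied by \(\e^{n}\e^{-\dist{X,Y}/R}\). The sum is then bounded by \(\e^{-\dist{X,Y}/R}\sum_{n\ge1}(\e k\tnorm{\tilde\Phi}_1(t-s))^n/n! = \e^{-\dist{X,Y}/R}(\e^{v(t-s)}-1)\). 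This is exactly~\eqref{eq:thm-finite-range-Lieb-Robinson-bound}. The main point of care is the counting step: the factor \(k^{n-1}\) rather than \(k^n\) is what produces \(k^{-1}\), and one must check that the first set is counted with \(\abs X\) and not \(k\). I would make this precise by induction on \(n\), using the bound \(\sum_{Z\cap W\ne\emptyset}\norm{\tilde\Phi(Z)}\le\abs W\,\tnorm{\tilde\Phi}_1\).
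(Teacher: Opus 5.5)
Your route is the paper's. You use the norm bound from \cref{thm:adaptive-dyson-expansion}, replace \(\Phi\) by \(\tilde{\Phi}\) (your self-avoidance argument for this is correct), restrict to \(n\geq\ceil{\dist{X,Y}/R}\), enlarge to all paths from \(X\) or (after reversal) from \(Y\), count \(\abs{X}\,k^{n-1}\tnorm{\tilde{\Phi}}_1^n\), and insert \(\e^{n-\dist{X,Y}/R}\geq 1\).

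One step fails as written: the order of operations in your counting. You first bound the simplex integral by \((t-s)^n/n!\) times \(\prod_j\sup_u\norm{\tilde{\Phi}(\gamma_j,u)}\), and only then sum over paths. The counting then needs \(\sup_x\sum_{Z\ni x}\sup_u\norm{\tilde{\Phi}(Z,u)}\). But \(\tnorm{\tilde{\Phi}}_1\) takes the supremum over time outside the sum over \(Z\), and for time-dependent interactions the two quantities differ.
\begin{itemize}
\item On \(\Z\), let bond \(\{z,z+1\}\) have norm \(J\sin^2(\pi u/2)\) for even \(z\) and \(J\cos^2(\pi u/2)\) for odd \(z\). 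Then \(\tnorm{\tilde{\Phi}}_1=J\), while the time-supremum quantity is \(2J\). Your chain therefore gives the bound with \(2J\) in place of \(\tnorm{\tilde{\Phi}}_1\).
\item On a general countable metric space, balls of radius \(R\) may contain infinitely many sites, since no decay function is assumed in this theorem. There the time-supremum quantity can even be infinite while \(\tnorm{\tilde{\Phi}}_1<\infty\).
\end{itemize}
So the inequality \(\sum_{Z\cap W\neq\emptyset}\norm{\tilde{\Phi}(Z)}\leq\abs{W}\,\tnorm{\tilde{\Phi}}_1\) that you plan to use in the induction is false if \(\norm{\tilde{\Phi}(Z)}\) means a supremum over time.

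The fix is to do the combinatorics at fixed times. For every \((u_1,\dotsc,u_n)\in I^n\),
\[
\sum_{\gamma\in\Path_n(X)}\prod_{j=1}^n\norm{\tilde{\Phi}(\gamma_j,u_j)}\leq\abs{X}\,\tnorm{\tilde{\Phi}}_1\,(k\,\tnorm{\tilde{\Phi}}_1)^{n-1}.
\]
This is \cref{lem:summability-of-path-weights} with \(C_{\tilde{\Phi}}\leq k\,\tnorm{\tilde{\Phi}}_1\), and your counting via \(\abs{\gamma_{j-1}}\leq k\) proves the same bound. Because it holds for any assignment of times, it also covers the reversed times \(u_{n+1-j}\) in the \(Y\)-version. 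Only after this do you integrate over \(\Delta^n_{s,t}\) to get \((t-s)^n/n!\). With that reordering, the rest of your argument goes through exactly as in the paper.
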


Where the \(F\)-decay bound uses the convolution property~\eqref{eq:convolution-constant}, we now use the body bound~\(k\) and the condition \(\tnorm{\Phi}_1<\infty\) to control the path sums.
The spatial decay supplied by~\(F\) is replaced by exponential decay in \(\dist{X,Y}/R\), obtained from the finite range~\(R\).

\begin{remark}
    For \(\Gamma=\Z^D\) with the \(\ell^p\) distance, \(1\leq p\leq\infty\), every set of diameter at most \(R\in \N_{>0}\) fits into a cube of side length \(R\), containing at most \((R+1)^D\) lattice points.
    Thus, for an interaction of range \(R\), one may always take \(k=(R+1)^D\).
    Moreover,
    \begin{equation*}
        \tnorm{\Phi}_1
        \leq
        \sum_{j=1}^k
        \, \binom{(2R+1)^D}{j}
        \, \sup_{u\in I}
        \, \sup_{Z\in\powerset(\Gamma)}
        \, \norm{\Phi(Z,u)}
        ,
    \end{equation*}
    since every contributing set \(Z\ni x\) is contained in the ball of radius \(R\) around \(x\), which contains at most \((2R+1)^D\) lattice points.
\end{remark}

Another concrete application where self-avoiding paths are useful is the localization of operators.
Compared to the result obtained with usual Lieb-Robinson bounds, terms supported outside the region in which one localizes do not affect the bound.
This is physically relevant, because distant terms in the interaction should not influence the local spreading of observables.
For the proof, which is given in \cref{sec:proof-localization-of-operators}, one combines \cref{thm:Lieb-Robinson-bound} with standard arguments.
The region in which we localize is given by the \(r\)-neighbourhood defined as
\begin{equation}
    \label{eq:r-neighbourhood}
    Z[r]
    \coloneq
    \Set{x\in \Gamma \given \dist{Z,x} \leq r}
    \qquad\text{for all \(Z\subseteq \Gamma\)}
    .
\end{equation}

\begin{corollary}[Localization of operators]
    \label{cor:localization-of-operators}
    Let \(F\) be a decay function on \(\Gamma\), and let \(\Phi\) be an interaction with \(F\)-decay.
    Let \(\tau\) be its generated evolution.
    For all \(s\),~\(t \in I\) with \(s \leq t\), \(Y\in \powerset(\Gamma)\), \(B\in \alg_Y\), \(r\geq 0\), there exists an approximation \(B_r\in \alg_{Y[r]}\) such that
    \begin{equation*}
        \norm{
            \tau_{s,t} \, B - B_r
        }
        \leq
        4
        \, \norm{B}
        \, C_F^{-1}
        \, \paren[\big]{\e^{C_F \, I_{\tilde{\Phi}}(s,t)}-1}
        \, \sum_{x\in Y} \, \sumstack[r]{y\in \Gamma\setminus Y[r]} \, F\pdist{x,y}
        ,
    \end{equation*}
    with
    \begin{equation*}
        \tilde{\Phi}(Z,u)
        =
        \begin{cases*}
            0
             & if \(Z\subseteq Y\) or \(Z\subseteq \Gamma\setminus Y[r]\) or \(\abs{Z}=1\),
            \\
            \Phi(Z,u)
             & otherwise.
        \end{cases*}
    \end{equation*}
\end{corollary}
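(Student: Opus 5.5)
The plan is to build \(B_r\) from a partial trace and control the error by a commutator bound of the type in \cref{thm:Lieb-Robinson-bound}, applied to a split of the evolution. The difficulty is that there is no tensor-product conditional expectation with a normal state on \(\alg\) when the \(\HS_x\) are infinite-dimensional. I would therefore use the standard trick from~\cite{NSY2019}: a map \(\Pi_{r}\colon \alg \to \alg_{Y[r]}\) satisfying
\[
\norm{A - \Pi_{r} A} \leq 2\,\sup\Set{\norm{\commutator{U,A}}}
\]
over unitaries \(U\in\alg_{\Gamma\setminus Y[r]}\), or more precisely the version of \(\Pi_r\) for which the error is bounded by twice the supremum of \(\norm{\commutator{A,C}}\) over \(C\in\alg_{\Gamma\setminus Y[r]}\) with \(\norm{C}\leq 1\). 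Together with an approximation argument, this is where the factor \(4\) should come from: a factor \(2\) from the localization map and a factor \(2\) from \(\norm{K}\leq 2\norm{C}\) for \(K=\commutator{C,\cdot}\). I would set \(B_r \coloneq \Pi_r(\tau_{s,t} B)\).

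The error then reduces to bounding \(\norm{\commutator{C,\tau_{s,t}B}}\) for \(C\) supported in \(\Gamma\setminus Y[r]\). Since that set is infinite, I would first take \(C\in\alg_{X}\) with \(X\) finite and \(X\subseteq \Gamma\setminus Y[r]\), and pass to general \(C\) by density. Setting \(K=\commutator{C,\cdot}\in\calB_X\) with \(\norm{K}\leq 2\norm{C}\), \cref{thm:Lieb-Robinson-bound} gives a bound with \(\sum_{x\in X}\sum_{y\in Y}F(d(x,y))\leq \sum_{x\in Y}\sum_{y\in\Gamma\setminus Y[r]}F(d(x,y))\), uniformly in \(X\).

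The main obstacle is the interaction \(\tilde\Phi\). \Cref{thm:Lieb-Robinson-bound} removes only the terms inside \(X\), whereas here we need to remove every term \(Z\subseteq\Gamma\setminus Y[r]\), uniformly in \(X\). I would handle this at the level of the self-avoiding path sum \eqref{eq:thm-adaptive-dyson-expansion-norm-bound} rather than by citing the theorem as a black box. Apply \cref{thm:adaptive-dyson-expansion} with \(X\) replaced by the (possibly huge but finite) set \(X\) and \(Y\) as given, and truncate at the first set \(\gamma_j\) on the path that meets \(\Gamma\setminus Y[r]\).

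Cleaner is to enlarge \(X\) to \(\hat X = X\cup\bigcup\Set{Z\subseteq \Gamma\setminus Y[r]}\), or alternatively to rewrite the expansion from the \(Y\) side. The terms that can occur are then exactly those of \(\tilde\Phi\). I expect this step to require reproving \cref{thm:Lieb-Robinson-bound} with a modified endpoint. In that proof, every intermediate \(\gamma_j\) in a self-avoiding path from \(X\) to \(Y\) that lies entirely in \(\Gamma\setminus Y[r]\) can be dropped by collapsing the path prefix. Concretely, I would use the expansion for \(\commutator{C,\tau_{s,t}B}\) in which \(K\) is localized in the region \(\Gamma\setminus Y[r]\) itself, handled through finite truncations \(\Lambda\). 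Every path must leave \(Y\) and eventually enter \(\Gamma\setminus Y[r]\). By self-avoidance, the first set touching the complement region terminates the path, and all sets before it intersect \(Y[r]\), so they are not contained in \(\Gamma\setminus Y[r]\).

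With that, the \(F\)-convolution estimate used for \cref{thm:Lieb-Robinson-bound} yields the stated bound with \(I_{\tilde\Phi}\). The remaining steps are the density argument in the size of \(\Lambda\) and in \(C\), which should follow from the existence of the thermodynamic limit in \cref{thm:adaptive-dyson-expansion}.
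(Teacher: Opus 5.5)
Your final, concrete plan is correct and is essentially the paper's proof. The paper takes \(B_r = \cexpsym_{Y[r]} \, \tau_{s,t} \, B\) using \cref{lem:conditional-expectation-spins}, and for \(A\) supported in a finite subset of \(\Gamma\setminus Y[r]\) applies \cref{thm:Lieb-Robinson-bound} directly to the restricted interaction \(\Phi\rvert_{Y[R']}\), with \(K=\commutator{A,\cdot}\) regarded as localized in the whole finite set \(Y[R']\setminus Y[r]\); the theorem's own \(\tilde{\Phi}\) then already discards every term inside the complement region, and one lets \(R'\to\infty\) via \cref{prop:existence-of-the-thermodynamic-limit}, so no reproving of \cref{thm:Lieb-Robinson-bound} is needed (your self-avoidance argument is exactly what it encodes) and the detour through the in general infinite set \(\hat X\) can be dropped.
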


\section{Proof of the adaptive Dyson expansion}
\label{sec:proof-adaptive-dyson-expansion}

We prove \cref{thm:adaptive-dyson-expansion} in two steps.
First, in \cref{sec:proof-finite-volume}, we establish the expansion~\eqref{eq:thm-adaptive-dyson-expansion} for finitely supported interactions.
Then, in \cref{sec:proof-thermodynamic-limit}, we prove existence of the dynamics for interactions satisfying~\eqref{eq:weighted-interaction-summability} by using the expansion for the restricted interactions \(\Phi\rvert_\Lambda\).
Passing to the thermodynamic limit in the expansion completes the proof of~\eqref{eq:thm-adaptive-dyson-expansion}.

Before we continue, notice that \(X=\emptyset\) implies \(K = 0\) and \(\SP_n(X,Y) = \emptyset\).
Likewise, \(Y=\emptyset\) implies \(B = c \, 1\), therefore \(K \, \tau_{s,t} \, B = K \, B = 0\),  and \(\SP_n(X,Y) = \emptyset\).
It follows that both sides of~\eqref{eq:thm-adaptive-dyson-expansion} vanish in both cases.
Hence, we assume that \(X\) and \(Y\) are non-empty in the following.

\subsection{Finitely supported interactions}
\label{sec:proof-finite-volume}

Let \(\Phi\) be a finitely supported interaction and denote by \(\tau\) its generated evolution.
Fix \((s,t)\in \Delta^2_I\), non-empty \(X\),~\(Y\in \powerset(\Gamma)\), \(K \in \calB_X\) and \(B\in \alg_Y\).

We begin by establishing a truncated version of the claimed expansion with finitely many terms plus a remainder
\begin{equation}
    \label{eq:proof-finite-volume-truncated-expansion}
    K \, \tau_{s,t} \, B
    =
    \sum_{m=0}^n \calQ_m + \calR_n
    ,
\end{equation}
where
\begin{equation*}
    \calQ_0
    \coloneq
    K
    \, \sumstack[lr]{\gamma \in \SP_0(X,Y)}
    \, \tau^{\gamma_{:-1}}_{s,t} \, B
    \qquadtext{and}
    \calR_0
    \coloneq
    K \, \sumstack[lr]{
        \gamma \in \SP_0(X) \suchthat\\
        \gamma \perp Y
    }
    \, \tau^{\gamma_{:-1}}_{s,t} \, B
    ,
\end{equation*}
and for all \(n \in \N_{>0}\)
\begin{align*}
    \calQ_n
    &\coloneq
    K
    \, \intlap_{\Delta^n_{s,t}} \odif{u}
    \, \sumstack{\gamma \in \SP_n(X,Y)}
    \, \tau^{\gamma_{:-1}}_{s,u_1}
    \, \Phi(\gamma_1,u_1)
    \, \tau^{\gamma_{:0}}_{u_1,u_2}
    \, \Phi(\gamma_2,u_2)
    \, \dotsm
    \, \tau^{\gamma_{:n-2}}_{u_{n-1},u_n}
    \, \Phi(\gamma_{n},u_n)
    \, \tau^{\gamma_{:n-1}}_{u_n,t}
    \, B
    ,
    \\
    \calR_n
    &\coloneq
    K
    \, \intlap_{\Delta^n_{s,t}} \odif{u}
    \, \sumstack{
        \gamma \in \SP_n(X) \suchthat\\
        \gamma \perp Y
    }
    \, \tau^{\gamma_{:-1}}_{s,u_1}
    \, \Phi(\gamma_1,u_1)
    \, \tau^{\gamma_{:0}}_{u_1,u_2}
    \, \Phi(\gamma_2,u_2)
    \, \dotsm
    \, \tau^{\gamma_{:n-2}}_{u_{n-1},u_n}
    \, \Phi(\gamma_{n},u_n)
    \, \tau^{\gamma_{:n-1}}_{u_n,t}
    \, B
    .
\end{align*}
Recall that \(\gamma_{:-1} = ( \, )\) and \(\tau^{( \, )}_{s,t} = \tau_{s,t}\) by definition.
After establishing~\eqref{eq:proof-finite-volume-truncated-expansion}, we will show that the remainder vanishes and the series converges as \(n\to\infty\).

\minisec{Expansion with remainder}
We now prove~\eqref{eq:proof-finite-volume-truncated-expansion} by induction over \(n \in \N_0\).
The base case \(n=0\) follows by noting that
\begin{equation*}
    \SP_0(X,Y)
    =
    \begin{cases*}
        \List{(X,Y)} & \(X\cap Y \neq \emptyset\)
        \\
        \emptyset    & \(X\cap Y = \emptyset\)
    \end{cases*}
\end{equation*}
and
\begin{equation*}
    \Set{\gamma \in \SP_0(X) \given \gamma \perp Y}
    =
    \begin{cases*}
        \emptyset  & \(X\cap Y \neq \emptyset\)
        \\
        \List{(X)} & \(X\cap Y = \emptyset\).
    \end{cases*}
\end{equation*}
Therefore, we have
\begin{equation*}
    \calQ_0
    =
    \begin{cases*}
        K \, \tau_{s,t} \, B & \(X\cap Y \neq \emptyset\)
        \\
        0                    & \(X\cap Y = \emptyset\)
    \end{cases*}
    \qquadtext{and}
    \calR_0
    =
    \begin{cases*}
        0                    & \(X\cap Y \neq \emptyset\)
        \\
        K \, \tau_{s,t} \, B & \(X\cap Y = \emptyset\).
    \end{cases*}
\end{equation*}

Assume that the truncated expansion~\eqref{eq:proof-finite-volume-truncated-expansion} holds for some \(n\in\N_0\).
To prove it at order \(n+1\), it suffices to show that
\begin{equation*}
    \calR_n
    =
    \calQ_{n+1} + \calR_{n+1}
    .
\end{equation*}

We first consider the case \(n=0\).
If \(X\cap Y\neq\emptyset\), then \(\calR_0=\calQ_1=\calR_1=0\).
Otherwise, \(\calR_0=K \, \tau_{s,t} \, B\), and comparing \(\tau\) and \(\tau^{(X)}\) using Duhamel's formula yields
\begin{align*}
    \calR_0
    &=
    K \, \tau_{s,t} \, B
    \\&=
    K \, \tau^{(X)}_{s,t} \, B
    +
    \intlap_s^t \odif{u}
    \, K
    \, \tau_{s,u}
    \, \sumstack[lr]{
        Z\in\powerset(\Gamma)\\
        Z\cap X\neq\emptyset
    }
    \, \Phi(Z,u)
    \, \tau^{(X)}_{u,t}
    \, B
    \\&=
    K \, \tau^{(X)}_{s,t} \, B
    + \calQ_1 + \calR_1
    \\&=
    \calQ_1 + \calR_1
    .
\end{align*}
In the third equality, we split the sum into sets intersecting~\(Y\) and sets disjoint from~\(Y\), which give~\(\calQ_1\) and~\(\calR_1\), respectively.
The last equality follows from \cref{lem:vanishing-term} applied to the path~\((X)\).

Now let \(n>0\) and fix a path \(\gamma_{:n} = \gamma\in\SP_n(X)\) with \(\gamma\perp Y\).
Applying Duhamel's formula to compare \(\tau^{\gamma_{:n-1}}\) with \(\tau^{\gamma_{:n}}\) gives
\begin{align*}
    \tau^{\gamma_{:n-1}}_{u_n,t} \, B
    &=
    \tau^{\gamma_{:n}}_{u_n,t} \, B
    +
    \intlap_{u_n}^t \odif{u_{n+1}}
    \, \tau^{\gamma_{:n-1}}_{u_n,u_{n+1}}
    \, \sumstack[lr]{Z\in \powerset(\Gamma)}
    \, \paren[\big]{
        \Phi^{\gamma_{:n-1}}(Z,u_{n+1}) - \Phi^{\gamma_{:n}}(Z,u_{n+1})
    }
    \, \tau^{\gamma_{:n}}_{u_{n+1},t} \, B
    \\&=
    \tau^{\gamma_{:n}}_{u_n,t} \, B
    +
    \intlap_{u_n}^t \odif{u_{n+1}} \, \tau^{\gamma_{:n-1}}_{u_n,u_{n+1}}
    \, \sumstack[lr]{
        Z\in \powerset(\Gamma)\suchthat\\
        \gamma_{:n-1} \perp Z \\
        \gamma_n \cap Z \neq \emptyset
    }
    \, \Phi(Z,u_{n+1}) \, \tau^{\gamma_{:n}}_{u_{n+1},t} \, B
    .
\end{align*}
The above sum runs over exactly the sets \(Z\) with \(\gamma \oplus Z \in \SP_{n+1}(X)\).
Note that it is also possible that \(\gamma\) cannot be extended without losing the self-avoiding property (cf.~\cref{subfig:self-avoiding-path-that-cannot-be-extended}).
In this case, the sum is automatically \(0\), and the following calculation shows that the first term also does not contribute to~\(\calR_n\).
Splitting the sum depending on whether~\(Z\) intersects~\(Y\), one obtains
\begin{equation*}
    \tau^{\gamma_{:n-1}}_{u_n,t} \, B
    =
    \tau^{\gamma_{:n}}_{u_n,t} \, B
    \begin{aligned}[t]
        &+
        \int_{u_n}^t \odif{u_{n+1}} \, \tau^{\gamma_{:n-1}}_{u_n,u_{n+1}}
        \quad \sumstack[lr]{
            Z\in \powerset(\Gamma)\suchthat\\
            \gamma\oplus Z \oplus Y \in \SP_{n+1}(X,Y)
        }
        \, \Phi(Z,u_{n+1}) \, \tau^{\gamma_{:n}}_{u_{n+1},t} \, B
        \\&+
        \int_{u_n}^t \odif{u_{n+1}} \, \tau^{\gamma_{:n-1}}_{u_n,u_{n+1}}
        \quad \sumstack[lr]{
            Z\in \powerset(\Gamma)\suchthat\\
            \gamma\oplus Z \in \SP_{n+1}(X)\\
            \gamma \oplus Z \perp Y
        }
        \, \Phi(Z,u_{n+1}) \, \tau^{\gamma_{:n}}_{u_{n+1},t} \, B
        .
    \end{aligned}
\end{equation*}
Inserting this identity into the definition of \(\calR_n\) results in
\begin{equation*}
    \calR_n
    =
    \begin{aligned}[t]
        &
        K
        \, \intlap_{\Delta^n_{s,t}} \odif{u}
        \, \sumstack{
            \gamma \in \SP_n(X) \suchthat\\
            \gamma \perp Y
        }
        \, \tau^{\gamma_{:-1}}_{s,u_1}
        \, \Phi(\gamma_1,u_1)
        \, \tau^{\gamma_{:0}}_{u_1,u_2}
        \, \Phi(\gamma_2,u_2)
        \, \dotsm
        \, \tau^{\gamma_{:n-2}}_{u_{n-1},u_n}
        \, \Phi(\gamma_{n},u_n)
        \, \tau^{\gamma_{:n}}_{u_n,t}
        \, B
        \\&+
        \calQ_{n+1}
        + \calR_{n+1}
        .
    \end{aligned}
\end{equation*}
The first term vanishes, since \cref{lem:vanishing-term} gives
\(
    \Phi(\gamma_n,u_n)
    \, \tau^{\gamma_{:n}}_{u_n,t} \, B
    =
    0
\).
Thus, \(\calR_n=\calQ_{n+1}+\calR_{n+1}\) and the induction is complete.

\begin{lemma}
    \label{lem:vanishing-term}
    Let \(\Phi\) be a finitely supported interaction, \(n\in \N_0\), \(\gamma \in \SP\) with \(\ell_\gamma=n\) and \(Y\in \powerset(\Gamma)\) with \(\gamma \perp Y\).
    Define \(\Phi^{\gamma}\) as in~\eqref{eq:thm-path-dependent-interactions} and let \(\tau^{\gamma}\) be its generated evolution.
    Then for all \(X \subseteq \bigcup_{j=0}^{\ell_\gamma} \gamma_j\), \(K\in \calB_X\), \(B\in \alg_Y\) and \((s,t)\in \Delta^2_I\), it holds that \(K \, \tau^{\gamma}_{s,t} \, B = 0\).
\end{lemma}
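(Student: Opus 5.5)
The idea is to show that \(\tau^{\gamma}_{s,t}\, B\) stays supported in the complement of \(W \coloneq \bigcup_{j=0}^{\ell_\gamma} \gamma_j\). Then \(K\), being localized in \(X \subseteq W\), annihilates it by the second condition in the definition of \(\calB_X\).

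Set \(\Lambda \coloneq \Gamma\setminus W\). Since \(\Phi\) is finitely supported, so is \(\Phi^\gamma\), and \(\tau^\gamma_{s,t}\) is given by the absolutely convergent Dyson series~\eqref{eq:def-time-evolution-finitely-supported-case} with \(\Phi\) replaced by \(\Phi^\gamma\). Every non-zero term of \(\Phi^\gamma\) is \(\Phi(Z,u)\) with \(\gamma\perp Z\), i.e.\ \(Z\cap W=\emptyset\), so \(Z\subseteq\Lambda\). By property~\itemref{item:def-interaction:localized}, \(\Phi(Z,u)\,\alg_{\Lambda'}\subseteq\alg_{\Lambda'\cup Z}\) for every \(\Lambda'\subseteq\Gamma\). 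In particular \(\Phi^\gamma(Z,u)\,\alg_\Lambda\subseteq\alg_\Lambda\) for every \(Z\) and \(u\).

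Because \(\gamma\perp Y\), we have \(Y\subseteq\Lambda\), hence \(B\in\alg_Y\subseteq\alg_\Lambda\). Applying the invariance inductively, each integrand
\[
\Phi^\gamma(Z_1,u_1)\dotsm\Phi^\gamma(Z_n,u_n)\,B
\]
lies in \(\alg_\Lambda\). The Bochner integrals over the simplices and the finite sums over \(Z_j\) then stay in \(\alg_\Lambda\), since \(\alg_\Lambda\) is a closed subspace and integrals of integrable functions with values in a closed subspace remain in that subspace. The norm-convergent series over \(n\) stays in \(\alg_\Lambda\) for the same reason. Thus \(\tau^\gamma_{s,t}\,B\in\alg_\Lambda\).

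Finally, \(X\cap\Lambda=\emptyset\) because \(X\subseteq W\). The definition of \(\calB_X\) gives \(K\,\alg_\Lambda=\{0\}\), so \(K\,\tau^\gamma_{s,t}\,B=0\). If \(\Lambda\) is infinite this still applies, since \(\alg_\Lambda\) is defined for arbitrary subsets and the localization conditions are stated for all \(\Lambda\subseteq\Gamma\).

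The main technical point is the closedness argument: the algebraic invariance of \(\alg_\Lambda\) must pass through the integrals and the infinite sum. This is routine because \(\alg_\Lambda\) is norm-closed by definition and the Dyson series converges absolutely in norm. There is no real obstacle.
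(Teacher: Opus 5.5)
Your proof is correct and is essentially the paper's argument: the paper also reads off from the Dyson series of \(\tau^\gamma\) that \(\tau^\gamma_{s,t}\,B\) is supported in the complement of \(\bigcup_j\gamma_j\), and then concludes \(K\,\tau^\gamma_{s,t}\,B=0\) from the localization of \(K\) in \(X\subseteq\bigcup_j\gamma_j\). You additionally spell out two steps the paper leaves implicit: that \(\alg_{\Gamma\setminus W}\) is invariant under each \(\Phi^\gamma(Z,u)\), and that this membership survives the integrals and the norm-convergent series.
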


\begin{proof}
    Fix \(\Lambda = \bigcup_{j=0}^{\ell_\gamma} \gamma_j\), \(X\subseteq \Lambda\), \(K\in \calB_X\), \(B\in \alg_Y\) and \((s,t)\in \Delta^2_I\).
    We recall that \(\Phi^\gamma(Z,t)\) vanishes unless \(Z\subseteq \Gamma\setminus \Lambda\).
    Moreover, by assumption \(Y\subseteq \Gamma \setminus \Lambda\).
    Hence, looking at the definition of \(\tau^\gamma\), we find
    \begin{equation*}
        \tau^\gamma_{s,t} \, B
        =
        \sum_{m=0}^\infty
        \intlap_{\Delta^m_{s,t}} \odif{u}
        \, \sumstack[r]{Z_1\in \powerset(\Gamma)}
        \, \Phi^\gamma(Z_1,u_1)
        \, \dotsm
        \, \sumstack[lr]{Z_m\in \powerset(\Gamma)}
        \, \Phi^\gamma(Z_m,u_m)
        \, B
        \in
        \alg_{\Gamma\setminus \Lambda}
        .
    \end{equation*}
    Therefore, \(K \, \tau^\gamma_{s,t} \, B = 0\), since \(K\in \calB_{\Lambda}\).
\end{proof}

\minisec{Convergence of the expansion}
We now show that, as \(n\to\infty\) in~\eqref{eq:proof-finite-volume-truncated-expansion}, the sum over the \(\calQ_m\) converges and \(\calR_n\) goes to zero.
We first establish a path estimate that controls these terms.
It will also be used to prove existence of the thermodynamic limit in \cref{sec:proof-thermodynamic-limit}.

\begin{lemma}[Summability of path weights]
    \label{lem:summability-of-path-weights}
    Let \(\Phi\) be an interaction satisfying~\eqref{eq:weighted-interaction-summability}.
    For all \(X\in\powerset(\Gamma)\), \(n\in\N_{>0}\), and \((u_1,\dotsc,u_n)\in I^n\), we have
    \begin{equation}
        \label{eq:lem-summability-of-path-weights}
        \sumstack{\gamma\in\Path_n(X)}
        \, \prod_{j=1}^n \, \norm{\Phi(\gamma_j,u_j)}
        \leq
        \abs{X} \, \tnorm{\Phi}_1 \, C_\Phi^{n-1}
        .
    \end{equation}
\end{lemma}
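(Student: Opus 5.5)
We argue by induction on \(n\), summing over the last set of the path first. The idea is that a path \(\gamma\in\Path_n(X)\) is built by successively choosing sets \(\gamma_1,\dotsc,\gamma_n\), each of which intersects its predecessor. The first choice is controlled by \(\tnorm{\Phi}_1\) and the size of \(X\). Every later choice is controlled by \(C_\Phi\), since the factor \(\abs{Z}\) in~\eqref{eq:weighted-interaction-summability} counts the number of sites from which the next set may be attached.

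For the base case \(n=1\), every \(\gamma\in\Path_1(X)\) has the form \((X,\gamma_1)\) with \(\gamma_1\cap X\neq\emptyset\). Each such set \(\gamma_1\) contains at least one \(x\in X\), so a union bound gives
\begin{equation*}
    \sumstack{Z\in\powerset(\Gamma)\suchthat\\ Z\cap X\neq\emptyset} \norm{\Phi(Z,u_1)}
    \leq
    \sum_{x\in X} \, \sumstack{Z\in\powerset(\Gamma)\suchthat\\ x\in Z} \norm{\Phi(Z,u_1)}
    \leq
    \abs{X} \, \tnorm{\Phi}_1
    ,
\end{equation*}
using the definition~\eqref{eq:def-interaction-1-norm}.

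For the induction step, take \(n\geq 1\) and write each \(\gamma\in\Path_{n+1}(X)\) uniquely as \(\gamma_{:n}\oplus Z\), where \(\gamma_{:n}\in\Path_n(X)\) and \(Z\cap\gamma_n\neq\emptyset\). For a fixed \(\gamma_{:n}\), the same union bound over the points of \(\gamma_n\) gives
\begin{equation*}
    \sumstack{Z\suchthat Z\cap\gamma_n\neq\emptyset}\norm{\Phi(Z,u_{n+1})}
    \leq
    \abs{\gamma_n}\,\tnorm{\Phi}_1 .
\end{equation*}
This bound produces a factor \(\abs{\gamma_n}\) attached to the weight \(\norm{\Phi(\gamma_n,u_n)}\). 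The sum of \(\abs{Z}\,\norm{\Phi(Z,u)}\) is what \(C_\Phi\) controls, so the plain induction hypothesis cannot absorb this factor. To fix this, we prove the stronger intermediate statement
\begin{equation*}
    \sumstack{\gamma\in\Path_n(X)} \abs{\gamma_n}\,\prod_{j=1}^n\norm{\Phi(\gamma_j,u_j)}
    \leq
    \abs{X}\, C_\Phi^{\,n} ,
\end{equation*}
again by induction.

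The base case of this weighted statement is the union bound over \(x\in X\) applied to \(\abs{Z}\,\norm{\Phi(Z,u_1)}\), which gives \(\abs{X}\,C_\Phi\). In the step, the choice of the last set contributes \(\sum_{x\in\gamma_n}\sum_{Z\ni x}\abs{Z}\,\norm{\Phi(Z,u_{n+1})}\leq\abs{\gamma_n}\,C_\Phi\), and the factor \(\abs{\gamma_n}\) is then handled by the hypothesis at order \(n\). Since the union bound only overcounts, all these inequalities remain valid.

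Finally, we obtain~\eqref{eq:lem-summability-of-path-weights}. For \(n\geq2\), we sum over \(\gamma_n\) using the unweighted bound \(\abs{\gamma_{n-1}}\,\tnorm{\Phi}_1\), and then apply the weighted statement at order \(n-1\). This yields \(\abs{X}\,C_\Phi^{\,n-1}\,\tnorm{\Phi}_1\), as claimed. The only real obstacle is to notice that the induction must be carried out with the weight \(\abs{\gamma_n}\) on the endpoint. Once that is done, every step is a union bound, and all sums are sums of nonnegative terms, so they may be freely reordered.
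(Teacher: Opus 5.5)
Your proof is correct, but it runs the induction in the opposite direction from the paper.

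The paper peels off the \emph{first} intermediate set $Z=\gamma_1$ and applies the induction hypothesis to the remaining path $(Z,\gamma_2,\dotsc,\gamma_{n+1})\in\Path_n(Z)$. The claimed bound is linear in the size of the starting set, and the hypothesis is quantified over all starting sets. So this step directly produces the factor $\abs{Z}\,\norm{\Phi(Z,u_1)}$, and $\sum_{Z\cap X\neq\emptyset}\abs{Z}\,\norm{\Phi(Z,u_1)}\leq\abs{X}\,C_\Phi$ closes the step without any auxiliary statement.

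You instead append the \emph{last} set. The weight $\abs{\gamma_n}$ then lands on the endpoint of the shorter path, which forces you to prove the strengthened bound $\sum_{\gamma\in\Path_n(X)}\abs{\gamma_n}\prod_{j=1}^n\norm{\Phi(\gamma_j,u_j)}\leq\abs{X}\,C_\Phi^{n}$ first. In the paper's formulation, the $\abs{X}$-dependence of the statement itself plays the role of your endpoint weight.

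Both arguments do the same bookkeeping: $\tnorm{\Phi}_1$ for the final set and $C_\Phi$ for every earlier one. The paper's route is shorter. Yours keeps $X$ fixed throughout, never invokes the hypothesis for other starting sets, and gives the endpoint-weighted bound as a by-product.
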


\begin{proof}
    First, note that for \(X=\emptyset\) there is no path \(\gamma\in \Path_n(\emptyset)\), and both sides vanish.
    Hence, we assume \(\abs{X} \geq 1\) in the following.
    Also note that \(\tnorm{\Phi}_1 \leq C_\Phi\) is finite.
    We now prove~\eqref{eq:lem-summability-of-path-weights} by induction over~\(n\in \N_{>0}\).
    For \(n=1\), every path in \(\Path_1(X)\) consists of~\(X\) and one set~\(Z\) intersecting~\(X\).
    Thus,
    \begin{equation*}
        \sumstack[r]{\gamma\in\Path_1(X)}
        \, \norm{\Phi(\gamma_1,u_1)}
        \leq
        \sum_{x\in X}
        \, \sumstack[r]{Z\in\powerset(\Gamma)\\ x\in Z}
        \, \norm{\Phi(Z,u_1)}
        \leq
        \abs{X} \, \tnorm{\Phi}_1
        .
    \end{equation*}
    Assume that~\eqref{eq:lem-summability-of-path-weights} holds for some~\(n\in \N_{>0}\).
    Fix \(X\in\powerset(\Gamma)\) and \((u_1,\dotsc,u_{n+1})\in I^{n+1}\).
    Every path \(\gamma\in\Path_{n+1}(X)\) is determined by its first intermediate set \(Z=\gamma_1\), and the remaining path \(\tilde{\gamma}=(Z,\gamma_2,\dotsc,\gamma_{n+1})\in\Path_n(Z)\).
    It follows that 
    \begin{align*}
        \sumstack{\gamma\in\Path_{n+1}(X)}
        \, \prod_{j=1}^{n+1} \, \norm{\Phi(\gamma_j,u_j)}
        &\leq
        \sumstack[lr]{
            Z\in\powerset(\Gamma)\\
            Z\cap X\neq\emptyset
        }
        \, \norm{\Phi(Z,u_1)}
        \, \sumstack[l]{\tilde{\gamma} \in\Path_n(Z)}
        \, \prod_{j=1}^n \, \norm{\Phi(\tilde{\gamma}_j,u_{j+1})}
        \\&\leq
        \sumstack[lr]{
            Z\in\powerset(\Gamma)\\
            Z\cap X\neq\emptyset
        }
        \, \norm{\Phi(Z,u_1)}
        \, \abs{Z}
        \, \tnorm{\Phi}_1 \, C_\Phi^{n-1}
        \\&\leq
        \abs{X}
        \, \tnorm{\Phi}_1 \, C_\Phi^{n}
        ,
    \end{align*}
    where we used the induction hypothesis in the second inequality and~\eqref{eq:weighted-interaction-summability} in the third.
    This completes the induction.
\end{proof}

To apply this result, notice that~\(C_\Phi<\infty\) because~\(\Phi\) is finitely supported.
Contractivity of the evolutions and the triangle inequality yield \(\norm{\calQ_0}\),~\(\norm{\calR_0} \leq \norm{K} \, \norm{B}\) and for \(n>0\)
\begin{align*}
    \norm{\calQ_n}
    &\leq
    \norm{K} \, \norm{B}
    \, \intlap_{\Delta^n_{s,t}} \odif{u}
    \, \sumstack{\gamma\in\SP_n(X,Y)}
    \, \prod_{j=1}^n \, \norm{\Phi(\gamma_j,u_j)}
    ,
    \\
    \norm{\calR_n}
    &\leq
    \norm{K} \, \norm{B}
    \, \intlap_{\Delta^n_{s,t}} \odif{u}
    \, \sumstack{
        \gamma\in\SP_n(X)\\
        \gamma\perp Y
    }
    \, \prod_{j=1}^n \, \norm{\Phi(\gamma_j,u_j)}
    .
\end{align*}
The paths contributing to \(\calR_n\) form a subset of \(\SP_n(X)\), and hence of \(\Path_n(X)\).
For \(\calQ_n\), every path ends at the fixed set~\(Y\) and deleting this last set defines the injective map
\begin{equation*}
    \SP_n(X,Y) \to \Path_n(X)
    ,
    \qquad
    \gamma \mapsto \gamma_{:n}
    .
\end{equation*}
Thus, we can upper bound
\begin{align*}
    \norm{\calQ_n},\norm{\calR_n}
    &\leq
    \norm{K} \, \norm{B}
    \, \intlap_{\Delta^n_{s,t}} \odif{u}
    \, \sumstack{\gamma \in \Path_n(X)}
    \, \prod_{j=1}^n
    \, \norm{\Phi(\gamma_j,u_j)}
    \\&\leq
    \norm{K} \, \norm{B}
    \, \abs{X} \, \tnorm{\Phi}_1 \, C_\Phi^{n-1}
    \, \frac{(t-s)^n}{n!}
    ,
\end{align*}
where we applied \cref{lem:summability-of-path-weights} and used the simplex volume.

Hence,
\begin{equation*}
    \sum_{n=0}^\infty \, \norm{\calQ_n}
    \leq
    \norm{K} \, \norm{B}
    \, \abs{X}
    \, \e^{C_\Phi \, (t-s)}
    <
    \infty
\end{equation*}
and \(\norm{\calR_n} \to 0\), and therefore
\begin{equation*}
    K \, \tau_{s,t} \, B
    =
    \sum_{m=0}^\infty \, \calQ_m
    ,
\end{equation*}
with the series being absolutely convergent.
This concludes the proof of \cref{thm:adaptive-dyson-expansion} for finitely supported interactions.

\subsection{Taking the thermodynamic limit}
\label{sec:proof-thermodynamic-limit}

We first prove existence of the thermodynamic limit using the expansion for finitely supported interactions.

\begin{proposition}
    \label{prop:existence-of-the-thermodynamic-limit}
    Let \(\Phi\) be an interaction satisfying~\eqref{eq:weighted-interaction-summability}.
    For each \(\Lambda \in \powerset(\Gamma)\), consider the restricted interaction \(\Phi\rvert_\Lambda\) defined in~\eqref{eq:restricted-interaction} and let \(\tau^{\Lambda}\) be its generated evolution.
    Then for each \((s,t)\in \Delta^2_I\) and \(B\in \alg\), the net \((\tau_{s,t}^\Lambda \, B)_{\Lambda \in \powerset(\Gamma)}\) converges in norm.
\end{proposition}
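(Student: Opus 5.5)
We will show that the net is Cauchy, first for local $B$ and then for all $B\in\alg$ by density. Since every $\tau^\Lambda_{s,t}$ is a contraction (\cref{item:prop-finite-evolutions:contractive}), it suffices to treat $B\in\alg_Y$ with $Y\in\powerset(\Gamma)$ non-empty. Indeed, if $\norm{B-B'}<\epsi$ with $B'$ local, then $\norm{\tau^\Lambda_{s,t}B-\tau^{\Lambda'}_{s,t}B}\leq 2\epsi+\norm{\tau^\Lambda_{s,t}B'-\tau^{\Lambda'}_{s,t}B'}$. Completeness of $\alg$ then gives convergence.

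For fixed local $B$ and finite $\Lambda\subseteq\Lambda'$, both containing $Y$, we compare the two evolutions by Duhamel's formula. Both are finitely supported and satisfy \cref{prop:finite-evolutions}. Differentiating $u\mapsto\tau^{\Lambda'}_{s,u}\tau^{\Lambda}_{u,t}B$ gives
\begin{equation*}
    \tau^{\Lambda'}_{s,t}B-\tau^{\Lambda}_{s,t}B
    =
    \intlap_s^t \odif{u}\, \tau^{\Lambda'}_{s,u}
    \sumstack[lr]{Z\subseteq\Lambda'\suchthat\\ Z\not\subseteq\Lambda}
    \Phi(Z,u)\,\tau^\Lambda_{u,t}B .
\end{equation*}
The obstacle is that $\tau^\Lambda_{u,t}B$ is not strictly local, so we cannot simply discard terms with $Z$ far from $Y$. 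This is where \cref{thm:adaptive-dyson-expansion}, already proven for finitely supported interactions, enters.

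Fix $Z$ and apply the finite-volume expansion with $K=\Phi(Z,u)\in\calB_Z$ to the evolution $\tau^\Lambda$ of $\Phi\rvert_\Lambda$. If $Z\cap Y=\emptyset$, the $n=0$ term vanishes, and the argument leading to~\eqref{eq:thm-adaptive-dyson-expansion-norm-bound} gives
\begin{equation*}
    \norm{\Phi(Z,u)\tau^\Lambda_{u,t}B}
    \leq
    \norm{\Phi(Z,u)}\,\norm{B}\sum_{n\ge1}\frac{(t-u)^n}{n!}
    \sum_{\gamma\in\Path_n(Z,Y)}\sup\prod_{j=1}^n\norm{\Phi(\gamma_j,\cdot)} .
\end{equation*}
Here the paths use only sets inside $\Lambda$, but we may enlarge to all of $\Gamma$. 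If $Z\cap Y\neq\emptyset$, we simply bound the term by $\norm{\Phi(Z,u)}\norm{B}$. Reversing the paths, we view $Z\oplus\gamma_1\oplus\dotsb$ as a path starting at $Y$: each sequence $(Y,\gamma_n,\dots,\gamma_1,Z)$ is a path in $\Path_{n+1}(Y)$. Including the $n=0$ case, $Z$ intersects $Y$. Hence
\begin{equation*}
    \norm{\tau^{\Lambda'}_{s,t}B-\tau^{\Lambda}_{s,t}B}
    \leq
    \norm{B}\sum_{m\ge1}\frac{(t-s)^{m-1}}{(m-1)!}\,(t-s)
    \sumstack[lr]{\eta\in\Path_m(Y)\suchthat\\ \eta_m\not\subseteq\Lambda}
    \prod_{j=1}^m\sup_{u}\norm{\Phi(\eta_j,u)} .
\end{equation*}
The factorial bookkeeping is crude, since only $(t-s)^m/(m-1)!$-type control is needed.

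By \cref{lem:summability-of-path-weights} (with suprema over times, which the same induction allows), the unrestricted path sums are bounded by $\abs{Y}\,\tnorm{\Phi}_1 C_\Phi^{m-1}$. The full series therefore converges absolutely, uniformly in $\Lambda$. Dominated convergence over the countable index set of paths then shows that the tail restricted to paths with some set $\not\subseteq\Lambda$ tends to $0$ as $\Lambda\uparrow\Gamma$. This holds because each individual path consists of finitely many finite sets and is eventually contained in $\Lambda$. The resulting bound is uniform in $\Lambda'\supseteq\Lambda$, so the net is Cauchy.

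The main technical point is making the "$Z$-localized operator applied to $\tau^\Lambda_{u,t}B$" bound rigorous via the finite-volume expansion and the path-reversal counting. A subsidiary point is that the sum over $Z$ and the tail argument must be organized as a sum over a fixed countable family of paths independent of $\Lambda$.
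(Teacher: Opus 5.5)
Your architecture matches the paper's proof. You use Duhamel between two restricted evolutions, and the finite-volume bound \eqref{eq:thm-adaptive-dyson-expansion-norm-bound} with $K=\Phi(Z,u)\in\calB_Z$ to control $\norm{\Phi(Z,u)\,\tau^\Lambda_{u,t}B}$. You then use the reversal $(Z,\gamma)\mapsto(Y,\gamma_n,\dotsc,\gamma_1,Z)\in\Path_{n+1}(Y)$, \cref{lem:summability-of-path-weights}, a tail argument and density. Comparing nested $\Lambda\subseteq\Lambda'$ rather than two arbitrary supersets is harmless.

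\textbf{The gap.} The step that fails is replacing the time-dependent weights by $\prod_j\sup_u\norm{\Phi(\eta_j,u)}$ and claiming the induction of \cref{lem:summability-of-path-weights} still goes through. That induction uses $\sum_{Z\cap X\neq\emptyset}\abs{Z}\,\norm{\Phi(Z,u_1)}\leq\abs{X}\,C_\Phi$ at one fixed time $u_1$. With suprema inside the sum you would need $\sup_x\sum_{Z\ni x}\abs{Z}\,\sup_u\norm{\Phi(Z,u)}<\infty$. The hypothesis \eqref{eq:weighted-interaction-summability} does not give this, because there the supremum over $t$ stands outside the sum.

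\textbf{Counterexample.} Take $\Gamma=\Z$ with non-trivial $\HS_x$, $I=[0,1]$, and $\Phi(\{0,k\},t)=f_k(t)\,\I\commutator{H_k,\cdot}$ for $k\geq1$, with all other terms zero. Here $H_k=H_k^*\in\alg_{\{0,k\}}$ is normalized so that $\norm{\commutator{H_k,\cdot}}=1$, and $0\leq f_k\leq1$ is continuous with $\max f_k=1$ and support in $[1/(k+1),1/k]$. At each time at most one term is non-zero, so $C_\Phi\leq2$. Yet $\sum_{\eta\in\Path_1(\{0\})}\sup_u\norm{\Phi(\eta_1,u)}=\infty$. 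Your dominating series is then infinite, so neither the uniformity in $\Lambda$ nor the tail argument follows. As written, the proof only covers interactions with $\sup_x\sum_{Z\ni x}\abs{Z}\sup_u\norm{\Phi(Z,u)}<\infty$, e.g.\ time-independent ones.

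Putting the supremum outside the path sum does not help either. The lemma then applies, but for $(s,t)=(0,1)$ and $Y=\{0\}$ the $m=1$ term of your bound contains $\sup_u\sum_{k\suchthat\{0,k\}\not\subseteq\Lambda}\norm{\Phi(\{0,k\},u)}=1$ for every finite $\Lambda$. So the tail does not vanish.

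\textbf{The repair.} This is exactly the paper's route: keep all times inside the integrals until after the sum over $Z$. Let $g(Z)$ denote $\int_s^t\odif{u}\,\norm{\Phi(Z,u)}\sum_{n\geq0}\int_{\Delta^n_{u,t}}\odif{v}\sum_{\gamma\in\Path_n(Z,Y)}\prod_{j=1}^n\norm{\Phi(\gamma_j,v_j)}$, where the $n=0$ term equals $1$ exactly when $Z\cap Y\neq\emptyset$.
\begin{itemize}
\item Your Duhamel identity, contractivity and the finite-volume bound give $\norm{(\tau^{\Lambda'}_{s,t}-\tau^\Lambda_{s,t})B}\leq\norm{B}\sum_{Z\not\subseteq\Lambda}g(Z)$.
\item Tonelli, your reversal bijection and $\int_s^t\odif{u}\int_{\Delta^n_{u,t}}\odif{v}=\int_{\Delta^{n+1}_{s,t}}\odif{v}$ give $\sum_{Z\in\powerset(\Gamma)}g(Z)\leq\abs{Y}\,\e^{C_\Phi(t-s)}$. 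Apply \cref{lem:summability-of-path-weights} at the fixed time tuple $(u,v_n,\dotsc,v_1)$; the lemma allows arbitrary times, so the reversed order is harmless.
\item A convergent sum of non-negative terms over the countable set $\powerset(\Gamma)$ has vanishing tails. Hence $\sum_{Z\not\subseteq\Lambda}g(Z)\to0$ as $\Lambda\nearrow\Gamma$, uniformly in $\Lambda'$.
\end{itemize}
In the example above this works precisely because $\int_0^1\norm{\Phi(\{0,k\},u)}\odif{u}\leq1/(k(k+1))$ is summable, although the suprema are not.
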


\begin{proof}
    Fix \((s,t)\in\Delta_I^2\).
    First let \(B\in \alg_Y\) with \(Y\in \powerset(\Gamma)\).
    We show that the net \((\tau_{s,t}^\Lambda \, B)_{\Lambda \in \powerset(\Gamma)}\) is Cauchy.

    Let \(\Lambda\in\powerset(\Gamma)\) contain \(Y\), and let \(\Lambda_1\),~\(\Lambda_2\in\powerset(\Gamma)\) contain \(\Lambda\).
    Duhamel's formula gives
    \begin{equation*}
        \paren[\big]{\tau_{s,t}^{\Lambda_1}-\tau_{s,t}^{\Lambda_2}} \, B
        =
        \intlap_s^t \odif{u}
        \, \tau_{s,u}^{\Lambda_1}
        \, \paren[\bigg]{
            \sumstack{Z \subseteq \Lambda_1} \Phi(Z,u)
            - \sumstack{Z \subseteq \Lambda_2} \Phi(Z,u)
        }
        \, \tau_{u,t}^{\Lambda_2} \, B
        .
    \end{equation*}
    Every term with \(Z\subseteq \Lambda_1\cap \Lambda_2\) occurs in both sums and cancels, all others satisfy \(Z \not\subseteq \Lambda\).
    Thus, taking norms, using Tonelli and \cref{eq:thm-adaptive-dyson-expansion-norm-bound} for the finitely supported~\(\Phi\rvert_{\Lambda_2}\) to bound \(\norm{\Phi(Z,u) \, \tau_{u,t}^{\Lambda_2} \, B}\), and enlarging the sum over self-avoiding paths to paths yields
    \begin{align}
        \notag
        \norm[\big]{\paren[\big]{\tau_{s,t}^{\Lambda_1}-\tau_{s,t}^{\Lambda_2}} \, B}
        &\leq
        \, \sumstack[lr]{
            Z\in\powerset(\Gamma)\\
            Z\not\subseteq\Lambda
        }\quad
        \intlap_s^t \odif{u}
        \, \norm[\big]{\Phi(Z,u) \, \tau_{u,t}^{\Lambda_2} \, B}
        \\&\leq
        \label{eq:proof-TDL-expansion}
        \norm{B}
        \, \sumstack[lr]{
            Z\in\powerset(\Gamma)\\
            Z\not\subseteq\Lambda
        }
        \quad \intlap_s^t \odif{u}
        \, \norm{\Phi(Z,u)}
        \, \sum_{n=0}^\infty
        \, \intlap_{\Delta^n_{u,t}} \odif{v}
        \sumstack{\gamma\in\Path_n(Z,Y)}
        \, \prod_{j=1}^n \, \norm{\Phi(\gamma_j,v_j)}
        .
    \end{align}
    The bound is independent of \(\Lambda_1\) and \(\Lambda_2\).
    Moreover, the outer sum on the right-hand side of~\eqref{eq:proof-TDL-expansion} without the condition \(Z\not\subseteq\Lambda\) converges absolutely.
    To see this, consider the bijection
    \begin{align*}
        \Path_{n+1}(Y)
        &\to
        \Set{ (Z,\gamma) \given Z \in \powerset(\Gamma),\ \gamma\in \Path_n(Z,Y)}
        \\
        \eta
        &\mapsto
        \paren[\big]{
            \eta_{n+1},
            (\eta_{n+1}, \eta_n, \dotsc, \eta_{1}, \eta_{0} = Y)
        }
        .
    \end{align*}
    Therefore, after using Tonelli’s theorem, \cref{lem:summability-of-path-weights} and  \(
        \int_s^t \odif{u} \int_{\Delta^n_{u,t}} \odif{v}
        =
        \int_{\Delta^{n+1}_{s,t}} \odif{v}
    \), we can upper bound
    \begin{align*}
        \Alignindent
        \norm{B}
        \, \sumstack[lr]{
            Z\in\powerset(\Gamma)
        }
        \quad \intlap_s^t \odif{u}
        \, \norm{\Phi(Z,u)}
        \, \sum_{n=0}^\infty
        \, \intlap_{\Delta^n_{u,t}} \odif{v}
        \sumstack{\gamma\in\Path_n(Z,Y)}
        \, \prod_{j=1}^n \, \norm{\Phi(\gamma_j,v_j)}\\
        &\leq
        \norm{B}
        \, \sum_{n=0}^\infty
        \, \intlap_s^t \odif{u}
        \intlap_{\Delta^n_{u,t}} \odif{v}
        \sumstack[r]{\eta\in\Path_{n+1}(Y)}
        \, \norm{\Phi(\eta_{n+1},u)}
        \, \prod_{j=1}^n \, \norm{\Phi(\eta_j,v_{n+1-j})}
        \\&\leq
        \norm{B}
        \, \sum_{n=0}^\infty
        \, \frac{(t-s)^{n+1}}{(n+1)!}
        \, \abs{Y}
        \, \tnorm{\Phi}_1
        \, C_\Phi^{n}
        \\&\leq
        \norm{B}
        \, \abs{Y}
        \, \e^{C_\Phi \, (t-s)}
        .
    \end{align*}

    This, together with the fact that each finite set \(Z\) is eventually contained in \(\Lambda\) as~\(\Lambda\nearrow\Gamma\), implies that the right-hand side of~\eqref{eq:proof-TDL-expansion} tends to zero as \(\Lambda\nearrow\Gamma\).
    Since \(\Lambda\subset \Lambda_1,\Lambda_2\), this implies \(
        \norm[\big]{\paren[\big]{\tau_{s,t}^{\Lambda_1}-\tau_{s,t}^{\Lambda_2}} \, B} \to 0
    \) as \(\Lambda_1\),~\(\Lambda_2 \nearrow \Gamma\). 

    Now let \(B\in \alg\), and let \((B_n)_{n\in \N_0}\) be a sequence of operators supported on finite sets approximating~\(B\).
    We find for all \(n\in \N_0\) that
    \begin{align*}
        \norm{\tau_{s,t}^{\Lambda_1} \, B - \tau_{s,t}^{\Lambda_2} \, B}
        &\leq
        \norm{\tau_{s,t}^{\Lambda_1} \, B - \tau_{s,t}^{\Lambda_1} \, B_n}
        +
        \norm{\tau_{s,t}^{\Lambda_1} \, B_n - \tau_{s,t}^{\Lambda_2} \, B_n}
        +
        \norm{\tau_{s,t}^{\Lambda_2} \, B_n - \tau_{s,t}^{\Lambda_2} \, B}
        \\&\leq
        \norm{B-B_n}
        +
        \norm{\tau_{s,t}^{\Lambda_1} \, B_n - \tau_{s,t}^{\Lambda_2} \, B_n}
        +
        \norm{B_n - B}
        .
    \end{align*}
    Choosing first \(n\) and then \(\Lambda_1\),~\(\Lambda_2\) sufficiently large shows that the net is Cauchy and thus converges in norm.
\end{proof}

We now complete the proof of \cref{thm:adaptive-dyson-expansion} by passing to the thermodynamic limit in the expansion.
Let \(\Phi\) be an interaction satisfying~\eqref{eq:weighted-interaction-summability}.
By \cref{prop:existence-of-the-thermodynamic-limit}, \(\Phi\) generates a time evolution \(\tau\).
Fix \((s,t)\in\Delta_I^2\), \(X\),~\(Y\in\powerset(\Gamma)\), \(K\in\calB_X\), and \(B\in\alg_Y\).
For every \(\eta \in\SP\), the interaction \(\Phi^{\eta}\) defined in~\eqref{eq:thm-path-dependent-interactions} is obtained from \(\Phi\) by setting some terms to zero.
It therefore also satisfies~\eqref{eq:weighted-interaction-summability}, so \cref{prop:existence-of-the-thermodynamic-limit} gives a time evolution \(\tau^{\eta}\) generated by \(\Phi^{\eta}\).

We first show that the right-hand side of~\eqref{eq:thm-adaptive-dyson-expansion} is well-defined. 
Fix \(n\in\N_{>0}\),~\(\gamma\in\SP_n(X,Y)\), and \(u=(u_1,\dotsc,u_n)\in\Delta^n_{s,t}\).
All evolution factors are contractive, so the norm of each integrand in~\eqref{eq:thm-adaptive-dyson-expansion} satisfies
\begin{equation*}
    \norm[\big]{
        K
        \, \tau_{s,u_1}
        \, \Phi(\gamma_1,u_1)
        \, \tau^{\gamma_{:0}}_{u_1,u_2}
        \, \dotsm
        \, \Phi(\gamma_n,u_n)
        \, \tau^{\gamma_{:n-1}}_{u_n,t}
        \, B
    }
    \leq
    \norm{K} \, \norm{B}
    \, \prod_{j=1}^n \, \norm{\Phi(\gamma_j,u_j)}
    .
\end{equation*}
As in the finite-support argument, deleting the final set \(Y\) defines an injective map from \(\SP_n(X,Y)\) to \(\Path_n(X)\) and preserves this product.
Applying \cref{lem:summability-of-path-weights} and integrating over the simplex therefore gives
\begin{equation}
    \label{eq:proof-absolute-convergence-bound}
    \begin{aligned}
        \sum_{n=1}^\infty
        \, \intlap_{\Delta^n_{s,t}} \odif{u}
        \, \sumstack{\gamma\in\SP_n(X,Y)}
        \, \prod_{j=1}^n \, \norm{\Phi(\gamma_j,u_j)}
        &\leq
        \abs{X}
        \, \sum_{n=1}^\infty
        \, \frac{(t-s)^n}{n!}
        \, \tnorm{\Phi}_1
        \, C_\Phi^{n-1}
        \\&\leq
        \abs{X} \, \e^{C_\Phi \, (t-s)}
        .
    \end{aligned}
\end{equation}
For \(n=0\), the sum on the right-hand side of~\eqref{eq:thm-adaptive-dyson-expansion} contains at most one term, whose norm is bounded by \(\norm{K} \, \norm{B}\).
Furthermore, because of \cref{def:time-evolution-general-case} and \cref{prop:finite-evolutions}, the integrand on the right-hand side of~\eqref{eq:thm-adaptive-dyson-expansion} is a pointwise norm limit of continuous restricted integrands and so strongly measurable.
Hence, the time integrals appearing in~\eqref{eq:thm-adaptive-dyson-expansion} exist as Bochner integrals, and the right-hand side of~\eqref{eq:thm-adaptive-dyson-expansion} converges absolutely in norm.

We now consider the net of restricted interactions \((\Phi\rvert_\Lambda)_{\Lambda\in \powerset(\Gamma)}\) as defined in~\eqref{eq:restricted-interaction} and for each \(\Lambda \in \powerset(\Gamma)\), \(\gamma \in \SP\) we define the interaction \((\Phi\rvert_\Lambda)^{\gamma}\) as in~\eqref{eq:thm-path-dependent-interactions} and denote by \(\tau^{\Lambda,\gamma}\) its generated evolution.
By the statement for finitely supported interactions proven in \cref{sec:proof-finite-volume}, we have for all \(\Lambda \in \powerset(\Gamma)\)
\begin{equation}
    \label{eq:expansion-of-restricted-evolution}
    K \, \tau^\Lambda_{s,t} \, B
    =
    K
    \, \sum_{n=0}^\infty
    \, \intlap_{\Delta^n_{s,t}} \odif{u}
    \, \sumstack[r]{\gamma\in \SP_n(X,Y)}
    \,
    \begin{aligned}[t]
        \tau^{\Lambda,\gamma_{:-1}}_{s,u_1}
        \, & \Phi\rvert_\Lambda(\gamma_{1},u_1)
        \, \tau^{\Lambda,\gamma_{:0}}_{u_1,u_2}
        \, \Phi\rvert_\Lambda(\gamma_2,u_2)
        \\&\dotsm
        \, \tau^{\Lambda,\gamma_{:n-2}}_{u_{n-1},u_n}
        \, \Phi\rvert_\Lambda(\gamma_{n},u_n)
        \, \tau^{\Lambda,\gamma_{:n-1}}_{u_n,t}
        \, B
        .
    \end{aligned}
\end{equation}
The left-hand side of~\eqref{eq:expansion-of-restricted-evolution} converges to \(K \, \tau_{s,t} \, B\) due to \cref{prop:existence-of-the-thermodynamic-limit}.
Notice that for every path~\(\eta\), it holds that \(
    (\Phi\rvert_\Lambda)^{\eta}
    =
    (\Phi^{\eta})\rvert_\Lambda
\) and hence by \cref{prop:existence-of-the-thermodynamic-limit}
\begin{equation*}
    \tau^{\Lambda,\eta}_{u,v} \, A
    \to
    \tau^{\eta}_{u,v} \, A
    \qquadtext{as}
    \Lambda\nearrow\Gamma
\end{equation*}
in norm for all \(A\in\alg\) and \((u,v)\in\Delta_I^2\).
Thus, every integrand in~\eqref{eq:expansion-of-restricted-evolution} converges pointwise to the corresponding integrand in~\eqref{eq:thm-adaptive-dyson-expansion}.
Moreover, the bound~\eqref{eq:proof-absolute-convergence-bound} also applies to the restricted expansion~\eqref{eq:expansion-of-restricted-evolution}.
Therefore, by dominated convergence the right-hand side of~\eqref{eq:expansion-of-restricted-evolution} also converges to the right-hand side of~\eqref{eq:thm-adaptive-dyson-expansion} along any sequence~\(\Lambda_m \nearrow \Gamma\).
This concludes the proof of \cref{thm:adaptive-dyson-expansion}.

Finally, consider~\eqref{eq:thm-adaptive-dyson-expansion} and suppose that \(X\cap Y=\emptyset\).
Then \(\SP_0(X,Y)=\emptyset\), and the \(n=0\) term vanishes.
Applying the triangle inequality and using contractivity of the evolutions gives~\eqref{eq:thm-adaptive-dyson-expansion-norm-bound}.

\section{Proofs of Lieb-Robinson bounds}
\label{sec:proofs-Lieb-Robinson-bounds}

Within this section we collect the proofs of the various Lieb-Robinson bounds stated in \cref{sec:results}.

\subsection{Proof of \texorpdfstring{\cref{thm:Lieb-Robinson-bound}}{Theorem~\ref*{thm:Lieb-Robinson-bound}}}
\label{sec:proof-Lieb-Robinson-bound}

Since \(\Phi\) has \(F\)-decay, it satisfies~\eqref{eq:weighted-interaction-summability}, so \cref{thm:adaptive-dyson-expansion} applies.
Fix \((s,t)\in\Delta_I^2\), disjoint \(X\),~\(Y\in\powerset(\Gamma)\), \(K\in\calB_X\), and \(B\in\alg_Y\).

To obtain~\eqref{eq:thm-Lieb-Robinson-bound}, observe that every intermediate set of a self-avoiding path \(\gamma\in \SP_n(X,Y)\) contains at least two sites and is not contained entirely in \(X\) or \(Y\).
Thus, we can replace \(\Phi\) by \(\tilde{\Phi}\) in~\eqref{eq:thm-adaptive-dyson-expansion-norm-bound} and then overcount by summing over all paths instead of only self-avoiding ones.
Therefore, inequality~\eqref{eq:thm-Lieb-Robinson-bound} follows from
\begin{align*}
    \norm{K \, \tau_{s,t} \, B}
    &\leq
    \norm{K} \, \norm{B}
    \, \sum_{n=1}^{\infty}
    \, \intlap_{\Delta^n_{s,t}} \odif{u}
    \, \sumstack{\gamma\in\Path_n(X,Y)}
    \, \prod_{j=1}^n
    \, \norm{\tilde{\Phi}(\gamma_j,u_j)}
    \\&\leq
    \norm{K} \, \norm{B}
    \, \sum_{n=1}^{\infty}
    \, \sum_{x\in X} \, \sum_{y\in Y} \, F\pdist{x,y}
    \, \frac{C_F^{n-1} \, I_{\tilde{\Phi}}(s,t)^n}{n!}
    ,
\end{align*}
where in the last step we have applied \cref{lem:bounding-sum-over-n-long-paths}, given after the proof.

\begin{lemma}
    \label{lem:bounding-sum-over-n-long-paths}
    Let \(F\) be a decay function on \(\Gamma\), and let \(\Phi\) be an interaction with \(F\)-decay.
    Fix \((s,t)\in \Delta^2_I\).
    For all \(X\),~\(Y\in \powerset(\Gamma)\) and \(n\in \N_{>0}\), it holds that
    \begin{align}
        \label{eq:lem-bounding-sum-over-n-long-paths-from-X-to-Y}
        \intlap_{\Delta^n_{s,t}} \odif{u}
        \, \sumstack{\gamma\in\Path_n(X,Y)}
        \, \prod_{j=1}^n
        \, \norm{\Phi(\gamma_j,u_j)}
        &\leq
        \sum_{x\in X} \, \sum_{y\in Y} \, F\pdist{x,y}
        \, \frac{C_F^{n-1} \, I_{\Phi}(s,t)^n}{n!}
    \shortintertext{and}
        \label{eq:lem-bounding-sum-over-n-long-paths-attached-to-X}
        \intlap_{\Delta^n_{s,t}} \odif{u}
        \, \sumstack{\gamma\in\Path_n(X)}
        \, \prod_{j=1}^n
        \, \norm{\Phi(\gamma_j,u_j)}
        &\leq
        \abs{X} \, \norm{F}
        \, \frac{C_F^{n-1} \, I_{\Phi}(s,t)^n}{n!}
        .
    \end{align}
    For \(n=0\), the left-hand side of~\eqref{eq:lem-bounding-sum-over-n-long-paths-from-X-to-Y} equals \(1\) if \(X\cap Y\neq\emptyset\) and \(0\) otherwise, while that of~\eqref{eq:lem-bounding-sum-over-n-long-paths-attached-to-X} equals \(1\).
\end{lemma}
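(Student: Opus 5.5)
The plan is to handle the time integral and the spatial sum separately. We first bound the path sum pointwise in \(u\in\Delta^n_{s,t}\) by a product of the interaction norms \(\norm{\Phi(u_j)}_F\), and then integrate over the simplex. Since the integrand \(\prod_{j=1}^n \norm{\Phi(u_j)}_F\) is symmetric in \(u\), its integral over \(\Delta^n_{s,t}\) equals \(\frac{1}{n!}\paren[\big]{\int_s^t \norm{\Phi(u)}_F \odif{u}}^n = I_\Phi(s,t)^n/n!\). (Measurability of \(u\mapsto\norm{\Phi(u)}_F\) follows since it is a supremum of countably many continuous functions, hence lower semicontinuous.) It therefore suffices to prove the pointwise estimates
\begin{equation*}
    \sumstack{\gamma\in\Path_n(X,Y)} \, \prod_{j=1}^n \norm{\Phi(\gamma_j,u_j)}
    \leq
    C_F^{n-1} \prod_{j=1}^n \norm{\Phi(u_j)}_F \, \sum_{x\in X}\sum_{y\in Y} F\pdist{x,y}
\end{equation*}
and the analogous one with \(\abs{X}\,\norm{F}\) in place of the double sum for \(\Path_n(X)\).

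For the pointwise bound on \(\Path_n(X,Y)\), we would proceed by overcounting via sites. Each path \(\gamma\in\Path_n(X,Y)\) admits points \(z_j\in\gamma_{j-1}\cap\gamma_j\) for \(j=1,\dotsc,n+1\), so the path sum is bounded by
\begin{equation*}
    \sum_{z_1\in X}\sum_{z_2,\dotsc,z_n\in\Gamma}\sum_{z_{n+1}\in Y} \; \prod_{j=1}^n \, \sumstack{Z\ni z_j,z_{j+1}} \norm{\Phi(Z,u_j)}.
\end{equation*}
The paths with given \(\gamma_1,\dotsc,\gamma_n\) are counted at least once, which is all we need. Each inner factor is at most \(\norm{\Phi(u_j)}_F\,F\pdist{z_j,z_{j+1}}\) by~\eqref{eq:interaction-norm}. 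This leaves the chain sum \(\sum_{z_2,\dotsc,z_n}\prod_{j=1}^n F\pdist{z_j,z_{j+1}}\). Applying~\eqref{eq:convolution-constant} \(n-1\) times, by summing out \(z_2\), then \(z_3\), and so on, we bound it by \(C_F^{n-1}F\pdist{z_1,z_{n+1}}\). Summing over \(z_1\in X\) and \(z_{n+1}\in Y\) then gives the claim.

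For \(\Path_n(X)\), the same construction applies but without a final endpoint. We pick \(z_1\in X\cap\gamma_1\) and \(z_j\in\gamma_{j-1}\cap\gamma_j\) for \(j=2,\dotsc,n\), and additionally choose any \(z_{n+1}\in\gamma_n\), summed over all of \(\Gamma\). The previous argument then produces \(\abs{X}\,\sup_{z_1}\sum_{z_{n+1}\in\Gamma}F\pdist{z_1,z_{n+1}}\,C_F^{n-1}\prod_j\norm{\Phi(u_j)}_F\), and the remaining supremum is bounded by \(\norm{F}\) via~\eqref{eq:F-norm}.

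The \(n=0\) statements are immediate from the definitions. \(\Path_0(X,Y)=\{(X,Y)\}\) when \(X\cap Y\neq\emptyset\) and is empty otherwise, and \(\Path_0(X)=\{(X)\}\); the empty product and the integral over the zero-simplex both equal \(1\).

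The only delicate point is the overcounting step. The map from paths to site chains is not injective, so we bound the path sum by a sum over pairs (path, compatible choice of sites) and drop the path constraint except for \(z_j,z_{j+1}\in\gamma_j\). This works because all terms are nonnegative, so the overcounting only increases the right-hand side.
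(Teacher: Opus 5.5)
Your proposal is correct and follows the paper's proof essentially step by step: you choose points \(z_j\in\gamma_{j-1}\cap\gamma_j\), bound each factor by \(F\pdist{z_j,z_{j+1}}\,\norm{\Phi(u_j)}_F\), apply the convolution property \(n-1\) times, and use the symmetry of the integrand to obtain \(I_\Phi(s,t)^n/n!\). For the second bound, the paper appends a singleton \(\List{y}\) to the path and reuses the first estimate with \(Y=\List{y}\), summed over \(y\in\Gamma\); this is the same as your choice of an extra point \(z_{n+1}\in\gamma_n\) summed over \(\Gamma\).
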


\begin{proof}
    For \(n=0\), there is exactly one path in \(\Path_0(X)\), namely \((X)\).
    On the other hand, \(\Path_0(X,Y)\) consists of \((X,Y)\) if \(X\cap Y\neq\emptyset\) and is empty otherwise.
    The claimed values follow, since the empty product and the integral over \(\Delta^0_{s,t}\) equal \(1\).
    Therefore, in the remainder of the proof we assume \(n\in\N_{>0}\).

    For every path \(\gamma\), we can find a point in each intersection \(z_j\in \gamma_{j-1} \cap \gamma_{j}\).
    Hence, we can upper bound the sum over paths~\eqref{eq:lem-bounding-sum-over-n-long-paths-from-X-to-Y} as
    \begin{align*}
        \intlap_{\Delta^n_{s,t}} \odif{u}
        \, \sumstack{\gamma\in\Path_n(X,Y)}
        \, \prod_{j=1}^n
        \, \norm{\Phi(\gamma_j,u_j)}
        &\leq
        \intlap_{\Delta^n_{s,t}} \odif{u}
        \, \sumstack[l]{
            z_1\in X\\
            z_2,\dotsc,z_{n} \in\Gamma\\
            z_{n+1} \in Y
        }
        \, \prod_{j=1}^n
        \, \sumstack[r]{
            \gamma_j\in \powerset(\Gamma)\suchthat\\
            z_j,z_{j+1} \in \gamma_j
        }
        \, \norm{\Phi(\gamma_j,u_j)}
        \\&\leq
        \intlap_{\Delta^n_{s,t}} \odif{u}
        \, \sumstack[l]{
            z_1\in X\\
            z_2,\dotsc,z_{n} \in\Gamma\\
            z_{n+1} \in Y
        }
        \, \prod_{j=1}^n
        \, F\pdist{z_j,z_{j+1}}
        \, \norm{\Phi(u_j)}_F
        \\&\leq
        \intlap_{\Delta^n_{s,t}} \odif{u}
        \, \sumstack[lr]{
            z_1\in X\\
            z_{n+1} \in Y
        }
        \, C_F^{n-1}
        \, F\pdist{z_1,z_{n+1}}
        \, \prod_{j=1}^n
        \, \norm{\Phi(u_j)}_F
        \\&=
        \sumstack[lr]{
            z_1\in X\\
            z_{n+1} \in Y
        }
        \, C_F^{n-1}
        \, F\pdist{z_1,z_{n+1}}
        \, \frac{1}{n!}
        \, \paren[\bigg]{\int_s^t \odif{u} \, \norm{\Phi(u)}_F}^n
        ,
    \end{align*}
    where we used the definition~\eqref{eq:interaction-norm} of the interaction norm in the second step, the convolution property~\eqref{eq:convolution-constant} \(n-1\) times in the third step, and the symmetry of the integrand under permutations of the time variables in the last step.

    To obtain~\eqref{eq:lem-bounding-sum-over-n-long-paths-attached-to-X}, notice that every path \(\gamma\in \Path_n(X)\) has a final non-empty set containing a point \(y\in \Gamma\).
    Therefore, by overcounting
    \begin{align*}
        \intlap_{\Delta^n_{s,t}} \odif{u}
        \, \sumstack{\gamma\in\Path_n(X)}
        \, \prod_{j=1}^n
        \, \norm{\Phi(\gamma_j,u_j)}
        &\leq
        \intlap_{\Delta^n_{s,t}} \odif{u}
        \, \sum_{y\in \Gamma}
        \, \sumstack{\gamma\in\Path_n(X,\List{y})}
        \, \prod_{j=1}^n
        \, \norm{\Phi(\gamma_j,u_j)}
        \\&\leq
        \intlap_{\Delta^n_{s,t}} \odif{u}
        \, \sum_{x\in X} \, \sum_{y\in \Gamma}
        \, C_F^{n-1}
        \, F\pdist{x,y}
        \, \prod_{j=1}^n
        \, \norm{\Phi(u_j)}_F
        \\&\leq
        \abs{X} \, \norm{F}
        \, \frac{C_F^{n-1} \, I_{\Phi}(s,t)^n}{n!}
        ,
    \end{align*}
    where we used the bound on the integrand from the first three steps of the previous estimate, applied with \(Y=\List{y}\), in the second step.
    This concludes the proof.
\end{proof}

\subsection{Proof of \texorpdfstring{\cref{cor:Lieb-Robinson-bound-nearest-neighbour-spin-chain}}{Corollary~\ref*{cor:Lieb-Robinson-bound-nearest-neighbour-spin-chain}}}
\label{sec:proof-Lieb-Robinson-bound-nearest-neighbour-spin-chain}

Since \(\Phi\) is a nearest-neighbour interaction on \(\Z\), \(C_\Phi \leq 2 \, \tnorm{\Phi}_1<\infty\).
Therefore, \cref{thm:adaptive-dyson-expansion} applies and gives the bound~\eqref{eq:thm-adaptive-dyson-expansion-norm-bound}.

Notice that, due to the assumptions on the interactions and self-avoidance, the paths contributing to the sum in~\eqref{eq:thm-adaptive-dyson-expansion-norm-bound} connect \(X\) to \(Y\) through sets of two neighbouring sites, with consecutive intermediate sets intersecting in one site.
Since \(X\) and \(Y\) are ordered and disjoint, there is exactly one such path between~\(X\) and \(Y\).

Let us consider the case \(X<Y\) and set \(x=\max X\) and \(y=\min Y\).
The other case is analogous.
Then, the path is given by
\begin{equation*}
    \gamma
    =
    \paren[\big]{X,\List{x,x+1},\dotsc,\List{y-1,y}, Y}
    \in
    \SP_{\dist{X,Y}}(X,Y)
    .
\end{equation*}
Therefore, we obtain
\begin{align*}
    \norm{K \, \tau_{s,t} \, B}
    &\leq
    \norm{K} \, \norm{B}
    \, \intlap_{\Delta^{\dist{X,Y}}_{s,t}} \odif{u}
    \ \prod_{j=1}^{\dist{X,Y}}
    \, \norm[\big]{\Phi(\List{x+j-1,x+j},u_j)}
    \\&\leq
    \norm{K} \, \norm{B}
    \, \frac{
        \paren[\big]{
            (t-s)
            \, \sup_{u\in \intervalcc{s,t}}
            \, \sup_{z\in \List{x,\dotsc, y -1}}
            \, \norm[\big]{\Phi(\List{z,z+1}, u)}
        }^{\dist{X,Y}}
    }{
        \dist{X,Y}!
    }
    ,
\end{align*}
which concludes the proof.

\subsection{Proof of \texorpdfstring{\cref{thm:finite-range-Lieb-Robinson-bound}}{Theorem~\ref*{thm:finite-range-Lieb-Robinson-bound}}}
\label{sec:proof-finite-range-Lieb-Robinson-bound}

Since \(\Phi\) is \(k\)-body and \(\tnorm{\Phi}_1<\infty\), condition~\eqref{eq:weighted-interaction-summability} is satisfied.
\Cref{thm:adaptive-dyson-expansion} therefore implies that \(\Phi\) generates an evolution~\(\tau\) which satisfies the bound~\eqref{eq:thm-adaptive-dyson-expansion-norm-bound}.

As discussed in the proof of \cref{thm:Lieb-Robinson-bound}, every intermediate set of a self-avoiding path \(\gamma\) from~\(X\) to~\(Y\) contains at least two sites and is not contained entirely in~\(X\) or~\(Y\).
Therefore, we may replace \(\Phi\) by \(\tilde{\Phi}\) in the interaction factors of~\eqref{eq:thm-adaptive-dyson-expansion-norm-bound}.
We assume that \(\tnorm{\tilde{\Phi}}_1>0\), since otherwise all terms in the expansion~\eqref{eq:thm-adaptive-dyson-expansion} vanish and the claim follows.

Because \(\tilde{\Phi}\) has range \(R\), every intermediate set of a path contributing to the sum in~\eqref{eq:thm-adaptive-dyson-expansion-norm-bound} has diameter at most \(R\).
Consecutive sets intersect, so the triangle inequality gives \(\dist{X,Y} \leq n \, R\), where \(n\) is the number of intermediate sets of such a path.
This restriction on \(n\) together with~\eqref{eq:thm-adaptive-dyson-expansion-norm-bound} implies
\begin{equation}
    \label{eq:proof-finite-range-self-avoiding-paths}
    \norm{K \, \tau_{s,t} \, B}
    \leq
    \norm{K} \, \norm{B}
    \, \sum_{n=\ceil{\dist{X,Y}/R}}^{\infty}
    \, \intlap_{\Delta^n_{s,t}} \odif{u}
    \, \sumstack{\gamma\in\SP_n(X,Y)}
    \, \prod_{j=1}^n
    \, \norm{\tilde{\Phi}(\gamma_j,u_j)}
    .
\end{equation}
Enlarging the sum to all paths starting at~\(X\) leads to
\begin{equation}
    \label{eq:proof-finite-range-paths-from-X}
    \norm{K \, \tau_{s,t} \, B}
    \leq
    \norm{K} \, \norm{B}
    \, \sum_{n=\ceil{\dist{X,Y}/R}}^{\infty}
    \, \intlap_{\Delta^n_{s,t}} \odif{u}
    \, \sumstack{\gamma\in\Path_n(X)}
    \, \prod_{j=1}^n
    \, \norm{\tilde{\Phi}(\gamma_j,u_j)}
    .
\end{equation}
Reversing a path \(\paren[\big]{X,\gamma_1,\dotsc,\gamma_n,Y} \in\SP_n(X,Y)\) gives a path \(\paren[\big]{Y,\gamma_n,\dotsc,\gamma_1,X} \in\SP_n(Y,X)\).
Returning to~\eqref{eq:proof-finite-range-self-avoiding-paths}, we reverse the paths and enlarge the sum to all paths starting at~\(Y\) to obtain
\begin{equation}
    \label{eq:proof-finite-range-paths-from-Y}
    \norm{K \, \tau_{s,t} \, B}
    \leq
    \norm{K} \, \norm{B}
    \, \sum_{n=\ceil{\dist{X,Y}/R}}^{\infty}
    \, \intlap_{\Delta^n_{s,t}} \odif{u}
    \, \sumstack{\gamma\in\Path_n(Y)}
    \, \prod_{j=1}^n
    \, \norm{\tilde{\Phi}(\gamma_j,u_{n+1-j})}
    .
\end{equation}
Since \(\tilde{\Phi}\) is \(k\)-body, estimate~\eqref{eq:lem-summability-of-path-weights} in \cref{lem:summability-of-path-weights} applies to the sums in~\eqref{eq:proof-finite-range-paths-from-X} and~\eqref{eq:proof-finite-range-paths-from-Y} with \(C_{\tilde{\Phi}} \leq k \, \tnorm{\tilde{\Phi}}_1\).
Thus, after integrating over the simplex, we obtain
\begin{align}
    \label{eq:proof-finite-range-after-path-summability}
    \norm{K \, \tau_{s,t} \, B}
    &\leq
    \norm{K} \, \norm{B}
    \, \min\List{\abs{X},\abs{Y}}
    \, \sum_{n=\ceil{\dist{X,Y}/R}}^{\infty}
    \, \frac{k^{n-1} \, \tnorm{\tilde{\Phi}}_1^n \, \paren{t-s}^n}{n!}
    \\&\leq
    \norm{K} \, \norm{B}
    \, \min\List{\abs{X},\abs{Y}}
    \, k^{-1}
    \, \sum_{n=\ceil{\dist{X,Y}/R}}^{\infty}
    \, \frac{\paren[\big]{k \, \tnorm{\tilde{\Phi}}_1 \, \paren{t-s}}^n}{n!}
    \, \e^{n-\dist{X,Y}/R}
    \notag
    .
\end{align}
In the second step, we multiplied each summand by \(\e^{n-\dist{X,Y}/R} \geq 1\).
Bounding the series with the exponential proves the statement.

\subsection{Proof of \texorpdfstring{\cref{cor:localization-of-operators}}{Corollary~\ref*{cor:localization-of-operators}}}
\label{sec:proof-localization-of-operators}

For the proof we use a conditional expectation, which corresponds to the normalized partial trace for finite~\(\Gamma\) and finite-dimensional local Hilbert spaces~\(\HS_x\).
It allows localizing any \(A\in \alg\) in a finite region \(X\in \powerset(\Gamma)\) up to an explicit error.

\begin{lemma}[{\cite[Corollary~4.4, Proposition~4.5]{NSY2019}}]
    \label{lem:conditional-expectation-spins}
    Let \(X\in \powerset(\Gamma)\).
    Then there exists a unit-preserving, completely positive linear map \(\cexpsym_{X} \colon \alg\to\alg\) satisfying, among other things,
    \begin{thmlist}
        \item
            \(\cexpsym_{X} \, \alg \subseteq \alg_X\),
        \item
            \label{lem:cmean-spins-E(ABC)=AE(B)C}
            \(\cexpsym_{X} \, A=A\) for all \(A\in\alg_{X}\),
        \item
            \label{lem:cmean-spins-unit-norm}
            \(\norm{\cexpsym_{X}}=1\),
        \item
            \label{lem:cmean-spins-bound-for-difference-A-E(A)}
            If for some \(B\in\alg\) and \(\epsi \geq 0\), it holds that
            \begin{equation*}
                \norm{\commutator{A,B}}
                \leq
                \epsi \, \norm{A}
                \qquad\text{for all \(A\in\alg_{\Lambda}\) and \(\Lambda\in \powerset(\Gamma\setminus X)\)}
                ,
            \end{equation*}
            then
            \begin{equation*}
                \norm{B-\cexpsym_{X} \, B}
                \leq
                2 \, \epsi
                .
            \end{equation*}
    \end{thmlist}
\end{lemma}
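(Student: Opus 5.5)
The plan is to construct \(\cexpsym_X\) first on each finite-volume algebra \(\alg_\Lambda\) with \(X\subseteq\Lambda\in\powerset(\Gamma)\), check that these maps are compatible, extend by continuity to \(\alg\), and then prove the four properties. The deviation estimate~\itemref{lem:cmean-spins-bound-for-difference-A-E(A)} is handled with a distance-to-commutant argument.

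\emph{Construction.} Write \(\HS_\Lambda=\HS_X\otimes\HS_{\Lambda\setminus X}\), so that \(\tilde\alg_\Lambda=\calL(\HS_X\otimes\HS_{\Lambda\setminus X})\). The algebra \(\tilde\alg_X\otimes 1\) equals the commutant \((1\otimes\calL(\HS_{\Lambda\setminus X}))'\). Since \(\calL(\HS_{\Lambda\setminus X})\) is an injective von Neumann algebra, Schwartz's property~P gives a projection onto this commutant. Concretely, \(E_\Lambda(B)\) lies in the weak\(^*\)-closed convex hull of \(\Set{(1\otimes U)B(1\otimes U)^*\given U \text{ unitary in } \calL(\HS_{\Lambda\setminus X})}\). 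It can be obtained as a point-weak\(^*\) cluster point of averages over finite sets of unitaries, or via an invariant mean on the unitary group of a suitable amenable subgroup, e.g.\ unitaries acting on finite-dimensional subspaces.

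Such an \(E_\Lambda\) is unital, completely positive, of norm one, and acts as the identity on \(\tilde\alg_X\otimes1\). To obtain maps independent of \(\Lambda\), I would instead fix a product of states \(\omega_x\) for \(x\notin X\) and set \(\cexpsym_X=\id_X\otimes\bigotimes_{x\notin X}\omega_x\) at the end. The catch is that a plain slice map only gives the weaker bound with \(\norm{B-\cexpsym_X B}\) controlled by \(2\epsi\) after comparing with \(E_\Lambda\). Concretely: \(\cexpsym_X\) and \(E_\Lambda\) agree on \(\tilde\alg_X\otimes 1\) and both have norm one, so
\[
\norm{B-\cexpsym_X B}\le\norm{B-E_\Lambda B}+\norm{\cexpsym_X(E_\Lambda B-B)}\le 2\norm{B-E_\Lambda B}.
\]
Hence it suffices to establish \(\norm{B-E_\Lambda B}\le\epsi\). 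Compatibility under \(\Lambda\subseteq\Lambda'\) is immediate for the product-slice map, so \(\cexpsym_X\) extends by density and contractivity to \(\alg\), giving properties (i)--(iii).

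\emph{Deviation bound.} For \(B\in\alg_\Lambda\) and each unitary \(U\) supported in \(\Lambda\setminus X\), the hypothesis gives
\[
\norm{B-UBU^*}=\norm{\commutator{B,U}}\le\epsi .
\]
The same bound then holds for convex combinations of the operators \(UBU^*\). Since the norm ball is weak\(^*\)-closed, it also passes to the cluster point, so \(\norm{B-E_\Lambda B}\le\epsi\). General \(B\in\alg\) are approximated by local \(B_n\to B\). For these, \(\norm{\commutator{A,B_n}}\le(\epsi+2\norm{B-B_n})\norm{A}\), and a limit argument closes the estimate, using continuity of \(\cexpsym_X\).

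\emph{Main obstacle.} I expect the hardest step to be the infinite-dimensional, possibly non-separable on-site spaces. There the unitary group is not compact, so the Haar averaging used in the finite-dimensional case is not available. One must instead rely on injectivity of \(\calL(\HS)\), which yields property~P and an expectation onto the commutant given by such averages. One must also ensure that the quasi-local extension stays norm-continuous, because the projection \(E_\Lambda\) is not normal. This is exactly why I would pass through the normal slice map and accept the factor~\(2\).
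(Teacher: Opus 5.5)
Your proof is correct and follows essentially the route of the source the paper cites, \cite{NSY2019}, since the paper itself gives no proof of this lemma. An \(\epsi\)-close element of \(\alg_X\) is obtained from weak\(^*\) cluster points of unitary averages over unitaries acting on finite-dimensional subspaces (property~P of \(\calL(\HS)\)), and comparing it with a unital normal product-state slice map, which fixes \(\alg_X\) and has norm one, costs exactly the factor \(2\). Two small points deserve to be made explicit: the \(\omega_x\) must be normal for the slice map on \(\calL(\HS_X\otimes\HS_{\Lambda\setminus X})\) to be defined, and linearity of \(E_\Lambda\) is never needed, since a single element of \(\tilde\alg_X\otimes 1\) within \(\epsi\) of \(B\) suffices.
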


Let \((s,t)\in \Delta^2_I\), \(Y\in \powerset(\Gamma)\), \(B\in \alg_Y\) and \(r\geq 0\).
We set \(B_r \coloneq \cexpsym_{Y[r]} \, \tau_{s,t} \, B\).
For \(Y=\emptyset\), the only possible operator is \(B=c \, 1\) and therefore \(\tau_{s,t} \, B = B = B_r\) for all~\(r\).
We now restrict to \(Y \neq \emptyset\).
To apply \cref{lem:conditional-expectation-spins}, let \(\Lambda\in \powerset(\Gamma \setminus Y[r])\) and \(A \in \alg_\Lambda\).
Then, there exists \(R \geq r\) such that \(\Lambda\subseteq Y[R']\) for all \(R' \geq R\) and \(A\in \alg_{Y[R']\setminus Y[r]}\).
By \cref{thm:Lieb-Robinson-bound} with \(K=\commutator{A,\cdot}\), the evolutions \(\tau^{R'}_{s,t} \, B\) generated by the restricted interactions \(\Phi\rvert_{Y[R']}\) satisfy
\begin{equation}
    \label{eq:proof-localization-LRB}
    \norm{
        \commutator{A, \tau^{R'}_{s,t} \, B}
    }
    \leq
    2 \, \norm{A} \, \norm{B}
    \, C_F^{-1}
    \, \paren[\big]{\e^{C_F \, I_{\Phi^{R'}}(s,t)}-1}
    \, \sum_{x\in Y} \, \sumstack[r]{y\in Y[R']\setminus Y[r]} \, F\pdist{x,y}
    ,
\end{equation}
with
\begin{equation*}
    \Phi^{R'}(Z,t)
    =
    \begin{cases*}
        0
         & if \(Z\subseteq Y\) or \(Z\subseteq Y[R']\setminus Y[r]\) or \(\abs{Z}=1\),
        \\
        \Phi\rvert_{Y[R']}(Z,t)
         & otherwise.
    \end{cases*}
\end{equation*}
Notice that \(\norm{\Phi^{R'}(u)}_F \leq \norm{\tilde{\Phi}(u)}_F\) for all \(u\in I\), which lets us replace \(I_{\Phi^{R'}}\) with \(I_{\tilde{\Phi}}\) in~\eqref{eq:proof-localization-LRB}.
Enlarging the second sum to \(y \in \Gamma \setminus Y[r]\) provides a uniform upper bound.
Therefore, by the convergence of \(\tau^{R'}_{s,t} \, B\) in the limit \(R'\to\infty\) (see \cref{prop:existence-of-the-thermodynamic-limit}) we obtain
\begin{equation*}
    \norm{
        \commutator{A, \tau_{s,t} \, B}
    }
    \leq
    2 \, \norm{A} \, \norm{B}
    \, C_F^{-1}
    \, \paren[\big]{\e^{C_F \, I_{\tilde{\Phi}}(s,t)}-1}
    \, \sum_{x\in Y} \, \sumstack[r]{y\in \Gamma\setminus Y[r]} \, F\pdist{x,y}
    .
\end{equation*}
Now, we can apply \cref{lem:conditional-expectation-spins} to bound \(\norm{\tau_{s,t} \, B - \cexpsym_{Y[r]} \, \tau_{s,t} \, B}\) and thus conclude the statement.

\appendix
\crefalias{section}{appendix}

\section{Proof of \texorpdfstring{\cref{prop:finite-evolutions}}{Proposition~\ref*{prop:finite-evolutions}}}
\label{app:proof-time-evolution-properties}

\finiteevolutions*
\begin{proof}[Proof of \cref{prop:finite-evolutions}]
    For the proof, let
    \begin{equation*}
        K(t)
        \coloneq
        \sumstack[lr]{Z\in \powerset(\Gamma)} \, \Phi(Z,t)
        .
    \end{equation*}
    Since \(\Phi\) is finitely supported and each map \(t\mapsto\Phi(Z,t)\) is norm-continuous and uniformly bounded, the same holds for \(t\mapsto K(t)\).
    Moreover, \(K(t)\) is dissipative for all~\(t\in I\), since the sum of two bounded dissipative operators is again dissipative, by Lumer-Phillips and the Trotter product formula.
    Then, for all \((s,t)\in \Delta^2_I\) we have
    \begin{equation*}
        \tau_{s,t}
        =
        \sum_{n=0}^\infty
        \intlap_{\Delta^n_{s,t}} \odif{u}
        \, K(u_1) \, \dotsm \, K(u_n)
        ,
    \end{equation*}
    which is bounded by
    \begin{equation}
        \label{eq:proof-prop-finite-evolutions-series-bound}
        \norm{\tau_{s,t}}
        \leq
        \sum_{n=0}^\infty
        \frac{\abs{t-s}^n}{n!}
        \, \paren[\Big]{
            \;\sup_{u\in \intervalcc{s,t}} \, \norm{K(u)}
        }^n
        .
    \end{equation}

    Statement~\itemref{item:prop-finite-evolutions:cocycle} can be shown by writing out the series on both sides of the equation and using that for \(N\in \N_0\) and \(t \in \intervalcc{s,u}\) the simplex \(\Delta_{s,u}^N\) can be decomposed into smaller simplices with sizes \(\abs{t-s}\) and \(\abs{u-t}\) as
    \begin{equation*}
        \Delta_{s,u}^N
        =
        \bigcup_{n=0}^N \Delta_{s,t}^n \times \Delta_{t,u}^{N-n}
        ,
    \end{equation*}
    with the union being disjoint up to sets of measure zero.
    Consequently,
    \begin{align*}
        \Alignindent
        \intlap_{\Delta^N_{s,u}} \odif{v}
        \, K(v_1)  \dotsm  K(v_N)
        =
        \sum_{n=0}^N
        \paren[\bigg]{
            \intlap_{\Delta^n_{s,t}} \odif{v}
            \, K(v_1)  \dotsm  K(v_n)
        }
        \,
        \paren[\bigg]{
            \intlap_{\Delta^{N-n}_{t,u}} \odif{w}
            \, K(w_1)  \dotsm  K(w_{N-n})
        }
        ,
    \end{align*}
    which, together with a change in summation index, implies the statement. 
    Statement~\itemref{item:prop-finite-evolutions:identity} follows immediately from the definition.
    For~\itemref{item:prop-finite-evolutions:continuity}, observe that the integrand
    \begin{equation*}
        K(u_1) \, \dotsm \, K(u_n)
    \end{equation*}
    is continuous in \(u_1,\dotsc,u_n\).
    Hence, the integral
    \begin{equation*}
        \intlap_{\Delta^n_{s,t}} \odif{u}
        \, K(u_1) \, \dotsm \, K(u_n)
    \end{equation*}
    is also continuous in both~\(s\) and~\(t\).
    With the bound~\eqref{eq:proof-prop-finite-evolutions-series-bound} and dominated convergence, one can show that \(\tau_{s,t}\) is also continuous in \(s\) and \(t\).
    For statement~\itemref{item:prop-finite-evolutions:derivative-second}, we first observe that we can reorder the integrals into
    \begin{equation*}
        \intlap_{\Delta^n_{s,t}} \odif{u}
        \, K(u_1) \, \dotsm \, K(u_n)
        =
        \intlap_s^t \odif{u_1} \intlap_s^{u_1} \odif{u_2} \, \dotsi \intlap_{s}^{u_{n-1}} \odif{u_n} \, K(u_n) \, \dotsm \, K(u_1)
        .
    \end{equation*}
    This simplifies taking the derivative
    \begin{align*}
        \Alignindent
        \sum_{n=0}^\infty \partial_t \intlap_s^t \odif{u_1} \intlap_s^{u_1} \odif{u_2} \, \dotsi \intlap_{s}^{u_{n-1}} \odif{u_n} \, K(u_n) \, \dotsm \, K(u_1)
        \\&=
        \sum_{n=1}^\infty \intlap_s^{t} \odif{u_2} \, \dotsi \intlap_{s}^{u_{n-1}} \odif{u_n} \, K(u_n) \, \dotsm \, K(u_2) \, K(t)
        \\&=
        \tau_{s,t} \, K(t)
        .
    \end{align*}
    It only remains to exchange the sum and derivative, which is justified because for every compact interval \(J=\intervalcc{s,b} \subseteq I\) one has \(\partial_t \, \id_{\alg} = 0\) and
    \begin{equation*}
        \sup_{t\in J}
        \, \norm[\bigg]{
            \partial_t \intlap_s^t \odif{u_1} \intlap_s^{u_1} \odif{u_2} \, \dotsi \intlap_{s}^{u_{n-1}} \odif{u_n} \, K(u_n) \, \dotsm \, K(u_1)
        }
        \leq
        \sup_{t\in J} \, \frac{\abs{t-s}^{n-1}}{(n-1)!} \, \paren[\Big]{
            \, \sup_{u\in J} \norm{K(u)}
        }^n
    \end{equation*}
    for \(n \geq 1\), and this bound is summable.
    Statement~\itemref{item:prop-finite-evolutions:derivative-first} follows in exactly the same way without reordering the integrals.
    To prove the final statement~\itemref{item:prop-finite-evolutions:contractive}, set \(u^l_k \coloneq s + k \, \frac{t-s}{l}\) for \(l\in \N_{>0}\) and \(k\in \List{0,1,\dotsc,l}\) and approximate the map \(\intervalcc{s,t} \to \calL(\alg), \, u \mapsto K(u)\) by a sequence of piecewise constant functions \((K^l)_{l\in \N_{>0}}\)
    \begin{equation*}
        K^l\colon \intervalcc{s,t} \to \calL(\alg)
        ,\
        u \mapsto K\paren[\big]{\max \, \Set{u^l_k \given u^l_k \leq u}}
        .
    \end{equation*}
    For \((u,v)\in \Delta^2_{s,t}\) we define the time-ordered exponential of \(K^l\) as
    \begin{equation*}
        \tau^l_{u,v}
        \coloneq
        \sum_{n=0}^\infty
        \intlap_{\Delta^n_{u,v}} \odif{w}
        \, K^l(w_1) \, \dotsm \, K^l(w_n)
        .
    \end{equation*}
    We note that by the same argument used to prove statement~\itemref{item:prop-finite-evolutions:cocycle} it holds for all \((u,v,w) \in \Delta^3_{s,t}\) that \(\tau^l_{u,v} \, \tau^l_{v,w} = \tau^l_{u,w}\).
    Moreover, by continuity of \(K\) and dominated convergence it holds that \(\tau^l_{u,v} \to \tau_{u,v}\) as \(l\to \infty\).
    Hence, we obtain
    \begin{equation*}
        \norm{\tau_{s,t}}
        =
        \lim_{l\to \infty} \norm{\tau^l_{s,t}}
        \leq
        \limsup_{l\to \infty} \, \norm{\tau^l_{u^l_0, u^l_1}} \, \norm{\tau^l_{u^l_1, u^l_2}} \, \dotsm \, \norm{\tau^l_{u^l_{l-1}, u^l_l}}
        .
    \end{equation*}
    Since the function \(K^l\) is constant on the intervals \(\intervalco{u^l_k,u^l_{k+1}}\) with \(k\in \List{0,\dotsc,l-1}\), the time-ordered exponential \(\tau^l_{u^l_k,u^l_{k+1}}\) reduces to an ordinary exponential
    \begin{equation*}
        \tau^l_{u^l_k,u^l_{k+1}}
        =
        \Exp[\Big]{\frac{t-s}{l} \, K(u^l_k)}
        .
    \end{equation*}
    As we have observed previously, \(K(u)\) is a bounded dissipative operator for all \(u\in I\).
    Then by the Lumer-Phillips Theorem, it generates a contractive semigroup and the norm of the above exponential is bounded by \(1\).
    Therefore, we can conclude
    \begin{equation*}
        \norm{\tau_{s,t}}
        \leq
        \limsup_{l\to \infty}
        \, \norm[\Big]{\Exp[\Big]{\frac{t-s}{l} \, K(u^l_0)}}
        \, \dotsm
        \, \norm[\Big]{\Exp[\Big]{\frac{t-s}{l} \, K(u^l_{l-1})}}
        \leq
        1
        .
        \qedhere
    \end{equation*}
\end{proof}

\statement{Acknowledgments}
This project was supported by the \foreignlanguage{ngerman}{Deutsche Forschungsgemeinschaft} (DFG, German Research Foundation) –
465199066; 
470903074
.
The research of T.W.~was also partially supported by the \foreignlanguage{ngerman}{Deutsche Forschungsgemeinschaft} (DFG, German Research Foundation) under Germany’s Excellence Strategy – EXC-2111 – 390814868.

\statement{AI declaration}
The proof of the result and the original draft were produced in early 2026 without the use of Large Language Models (LLMs).
LLMs were used to check the draft for errors, which led to various minor improvements and corrections.
The authors take full responsibility for the manuscript in its entirety.

\printbibliography[heading=bibintoc]

\end{document}